\DeclareMathOperator{\Vol}{Vol}
\newcommand{\pr}{{\rm pr}}
\newcommand{\R}{{\mathbb{R}}}
\newcommand{\C}{{\mathbb{C}}}
\newcommand{\Ll}{{\mathscr{L}}}
\newcommand{\beq}{\begin{equation}}
\newcommand{\eeq}{\end{equation}}
\newcommand{\bea}{\begin{eqnarray}}
\newcommand{\eea}{\end{eqnarray}}
\newcommand{\ben}{\begin{eqnarray*}}
\newcommand{\een}{\end{eqnarray*}}
\newcommand{\bem}{\begin{enumerate}}
\newcommand{\eem}{\end{enumerate}}
\newcommand{\ra}{\rightarrow}
\newcommand{\hra}{\hookrightarrow}
\newcommand{\cd}{\partial}
\newcommand{\wt}{\widetilde}
\newcommand{\less}{\backslash}
\newcommand{\M}{{\sf M}}
\newcommand{\csob}{C_{Sob}}
\newcommand{\cvol}{C_{vol}}
\def \d{\mathrm{d}}
\newcommand{\dstar}{\delta}
\newcommand{\sym}[1]{{\rm Sym}^{#1}}
\newcommand{\ip}[1]{\langle #1 \rangle}
\newcommand{\ignore}[1]{}
\renewcommand{\star}{*}
\renewcommand{\L}{\mathsf{L}}
\newcommand{\ol}{\overline}
\newcommand{\Conn}{{\mathfrak A}}
\newcommand{\eps}{\varepsilon}
\renewcommand{\phi}{\varphi}
\theoremstyle{plain}
\newtheorem{thm}{Theorem}
\newtheorem{lemma}[thm]{Lemma}
\newtheorem{remark}[thm]{Remark}
\newcommand{\news}{\setcounter{equation}{0}}
\newenvironment{proof}{\noindent{\it Proof:\, }}{\hfill$\Box$\vspace*{0.5cm}
}
\begin{document}

\title{Chern-Simons deformation of vortices on compact domains}
\author{
S.P. Flood\thanks{Email: {\tt s.p.flood@leeds.ac.uk}}\:  and J.M. Speight\thanks{E-mail: {\tt speight@maths.leeds.ac.uk}}\\
School of Mathematics, University of Leeds\\
Leeds LS2 9JT, England}

\maketitle

\begin{abstract}
Existence of Maxwell-Chern-Simons-Higgs (MCSH) vortices in a Hermitian line bundle 
$\L$ over a general compact Riemann surface
$\Sigma$
is proved by a continuation method. The solutions are proved to be smooth both spatially and
as functions of the Chern-Simons deformation parameter $\kappa$, and exist for all $|\kappa|<\kappa_*$,
where $\kappa_*$ depends, in principle, on the geometry of $\Sigma$, the degree $n$ of $\L$, which may be interpreted as the vortex number, and the vortex positions. 
A simple upper bound on $\kappa_*$, depending only on $n$ and the volume of
$\Sigma$, is found. Further, it is proved that a positive {\em lower} bound on $\kappa_*$, depending
on $\Sigma$ and $n$, but independent of vortex positions, exists. A detailed numerical
study of rotationally equivariant vortices on round two-spheres is performed. We find that $\kappa_*$ in general does depend on vortex positions, and, for fixed $n$ and radius, tends to be larger the more evenly vortices are distributed between the North and South poles. A generalization of the MCSH model
to compact K\"ahler domains $\Sigma$ of complex dimension $k\geq 1$ is formulated. The Chern-Simons term is replaced by the integral over spacetime of $A\wedge F\wedge \omega^{k-1}$, where $\omega$ is the K\"ahler form on $\Sigma$. A topological lower bound on energy is found, attained by solutions
of a deformed version of the usual vortex equations on $\Sigma$. Existence, uniqueness
and smoothness of vortex solutions of these generalized equations is proved, for $|\kappa|<\kappa_*$, and
an upper bound on $\kappa_*$ depending only on the K\"ahler class of $\Sigma$ and the first Chern 
class of $\L$ is obtained.
\end{abstract}

\maketitle

\section{Introduction}
\news

Vortices are the simplest class of topological solitons occuring in gauge theory. Being simple, they are useful prototypes for more complicated, higher-dimensional solitons (monopoles, instantons, calorons), as well as having interesting applications in their own right, in condensed matter physics and
cosmology. They arise in the abelian Higgs model, a $(2+1)$-dimensional field theory consisting of a complex scalar field $\phi$ (the Higgs field) minimally coupled to a $U(1)$ gauge field $A$, obeying Maxwell electrodynamics. For a particular choice of the Higgs self-interaction potential, this theory exhibits the mathematically interesting property of ``self duality": there is a topological lower bound on energy which is attained by solutions of a coupled system of first order PDEs. The space of gauge equivalence classes of solutions of this system, in a fixed topological
class, is a finite dimensional smooth manifold, the so called $n$-vortex moduli space $M_n$, which inherits a canonical K\"ahler structure. One may identify $M_n$ with the space of unordered $n$-tuples of marked points in physical space, these being the points at which the Higgs field vanishes. This is true whether physical space
is $\R^2$ \cite{tau} or a compact Riemann surface \cite{bra,garpra}.
There is a well-developed formalism for extracting the low energy dynamics of vortices from the geometry of $M_n$, originally developed by Manton, see \cite{mansut} for a thorough review. 

Many elaborations on the basic abelian Higgs model preserving a self-duality structure are possible. (One can, for example, allow both physical space and the target space of the Higgs field to be
K\"ahler manifolds, and the gauge group to be any Lie group with a Hamiltonian and isometric action on target space.)
From a physical standpoint, perhaps the most interesting elaboration is the inclusion of a Chern-Simons term in the theory. This converts the vortices into dyons, that is, particles carrying both magnetic flux and electric charge, and allows the possibility of exotic exchange statistics once the theory is quantized. There are two ways to introduce a Chern-Simons term into the theory while keeping Lorentz covariance and a self-duality structure. In one \cite{jacleewei} the Maxwell term for $A$ is directly replaced by the Chern-Simons term, and the usual quartic Higgs potential is replaced by a certain sextic potential. This Chern-Simons-Higgs (CSH) model has been quite thoroughly studied but, even so, the existence theory for vortices is less well developed than for standard vortices. It is known that an $n$-vortex exists for each choice of
$n$ points in physical space $\Sigma$ if $\Sigma=\R^2$ \cite[pp.\ 164-177]{yan} or a flat torus \cite{cafyan}. Once the model is put on a compact domain, the coupling constant in front of the Chern-Simons term, usually denoted $\kappa$, becomes a nontrivial parameter (on $\R^2$ it can be scaled away). It is proved in \cite{cafyan} that for each set $D$ of $n$ marked points on a torus, there exists $\kappa_*(D)>0$, depending on $D$, such that, for all $\kappa\in(0,\kappa_*(D))$ there is a vortex solution with
$\phi^{-1}(0)=D$, and that $\kappa_*(D)$ is finite for all $D$ (that is, for large enough $\kappa$, no $n$-vortex exists). Existence of vortices on compact Riemann surfaces of higher genus has not been established, and there does not appear to be a quick and simple resolution for this. In particular, a direct application of Bradlow's approach \cite{bra} to this vortex system is unhelpful because the higher nonlinearity of the sextic Higgs potential produces an elliptic PDE with analytically difficult nonlinear terms. Even recent studies of this type of vortex, dealing with the generalization to nonabelian gauge groups, restrict themselves to the case of flat tori \cite{hanlinyan}.

In this paper, we address the vortex existence question in the second, rather less well-studied Chern-Simons vortex system, making progress on arbitrary compact domains. This model, originally due to Lee, Lee and Min \cite{leeleemin} is a one-parameter {\em deformation} of the basic abelian Higgs model, the deformation
parameter being the Chern-Simons coupling $\kappa$. The model keeps the usual Maxwell term for $A$, but adds ($\kappa$ times) the Chern-Simons term, and couples the Higgs field $\phi$ to a new neutral scalar field $N$ via a $\kappa$-dependent, but still quartic, interaction potential. We shall refer to it as the Maxwell-Chern-Simons-Higgs (MCSH) model. Following Bradlow \cite{bra}, one can, for a fixed set of vortex positions, formulate the vortex equations as a coupled {\em pair} of semilinear second order elliptic PDEs, for $|\phi|$ and $N$. Proving existence of solutions of systems of semilinear PDEs is, in general, a much more difficult problem than for a single PDE.\, Considerable progress has been made, in the case $\Sigma=T^2$ by Ricciardi and Tarantello \cite{rictar}. By a thorough analysis of the coupled system, they establish that, for each set $D$ of vortex positions, and for all $\kappa>0$ sufficiently small, there are in fact at least {\em two} inequivalent vortex solutions with $\phi^{-1}(0)=D$. They also find a global upper bound on $\kappa_*(D)$, depending only on the volume of $\Sigma$, and prove that, in two different limits, vortex solutions of the MCSH model converge to solutions of both the CSH and original abelian Higgs models. 

It is plausible that the methods of \cite{rictar} should extend to arbitrary compact 
Riemann surfaces $\Sigma$ (and, indeed, existence results for MCSH vortices on general $\Sigma$ are sometimes asserted as folk theorems on this basis \cite{hankim}). That is not, however, the aim of the current paper. Rather, we will directly exploit the deformation character of the MCSH system to give a much more elementary existence (and local uniqueness) proof of those vortices which continue smoothly to $\kappa=0$. The idea is that, at  $\kappa=0$, we know a unique solution exists for each choice of $D$, namely the standard abelian Higgs vortex augmented by $N\equiv 0$. An Implicit Function Theorem argument then allows us to deduce that, for each $D$, there is
$\kappa_*(D)>0$ such that, for all $\kappa\in(-\kappa_*(D),\kappa_*(D))$ there is a locally unique vortex solution with $\phi^{-1}(0)=D$. This solution is smooth, and depends smoothly on the deformation parameter $\kappa$. 
We also prove the existence of a positive lower bound $\kappa_{**}$ on $\kappa_*(D)$, depending on
$n$ and $\Sigma$, but independent of $D$. Hence, for all position sets $D$ of
size $n$, locally unique vortices with $\phi^{-1}(0)=D$ exist for all $-\kappa_{**}<\kappa<\kappa_{**}$. Loosely, this shows that, for sufficiently small $\kappa$, the entire moduli space of $n$-vortices $M_n$ survives the Chern-Simons deformation, a key underlying assumption of the various proposals for moduli space approximations to low energy vortex dynamics in this model \cite{kimlee,colton,alqspe}.
As far as we are aware, this is the first time existence of a global {\em lower} bound on $\kappa_*$
has been established, and the smooth continuation viewpoint is crucial to our argument. In comparison with \cite{rictar}, we obtain more refined information (existence of {\em smooth curves} of solutions parametrized by $\kappa$, and a lower bound on $\kappa_*$) in more general geometries, but only for one type of vortex: those
continuously connected to ordinary abelian Higgs vortices. Our argument is also considerably easier, using only basic facts from functional analysis. 

We will also 
find
a global {\em upper} bound on $\kappa_*(D)$, independent of $D$. 
In the case $\Sigma=T^2$, this is larger (hence worse) than a bound obtained in 
\cite{rictar}, but, again, the proof is much simpler. Neither bound is expected to be sharp. 
Our bound may be thought of as the MCSH analogue of the Bradlow bound for existence of undeformed vortices
\cite{bra}, which states that $n$-vortices cannot exist if the volume of $\Sigma$ is less than $4\pi n$. As we will see, Chern-Simons deformation makes this requirement more stringent: $\kappa$-deformed
vortices cannot exist if $\Vol(\Sigma)<4\pi n(1+\kappa^2)$.

The question arises whether the maximal coupling $\kappa_*(D)$ at which vortices with $\phi^{-1}(0)=D$ exist actually depends nontrivially on $D$. We will produce robust numerical evidence that it does, by studying the MCSH model on the round sphere (of radius $R\geq \sqrt{n}$) in the cases where $D$ consists of the north pole, with multiplicity $n_+$, and the south pole with multiplicity $n_-=n-n_+$. In these cases, the Bogomol'nyi equations reduce, due to rotational invariance, to an ODE system, which we solve numerically via a shooting method. We find that, for given $n$ and $R$, the maximal $\kappa$ at which  vortices exist depends on $n_-$, the number placed at the south pole. For example, $(n_+,n_-)=(1,1)$ vortices have larger $\kappa_*$ than $(n_+,n_-)=(2,0)$ vortices.
We also find that the solution curves, rather than disappearing at $\kappa=\kappa_*$, have a turning point in $\kappa$, continuing smoothly back towards $\kappa=0$, approaching
a singular limit in which the magnetic field becomes uniform, the Higgs field vanishes, and the
neutral scalar field becomes uniform and diverges to $-\infty$. This indicates that the two distinct vortex solutions whose existence for each $D$ (on $T^2$) and small $\kappa>0$ was proved in \cite{rictar}, merge at $\kappa=\kappa_*(D)$, and can actually be considered as a single, connected solution branch. 

To illustrate the power and elegance of the continuation/IFT
strategy, we go on to formulate a Chern-Simons-like deformation of vortices on a general
compact K\"ahler manifold of complex dimension $k$. In this setting, $D$ is an effective divisor in $\Sigma$, and the Chern-Simons term is replaced by a constant multiple of
$$
\int_{I\times\Sigma} A\wedge F\wedge \omega^{k-1}
$$
where $\omega$ is the K\"ahler form on $\Sigma$. Again, a simple IFT argument establishes the existence
and local uniqueness of a smooth curve (parametrized by $\kappa\in(-\kappa_*(D),\kappa_*(D))$) of smooth solutions for each divisor $D$.  A uniform upper bound on $\kappa_*$, depending only on
the K\"ahler class of $\Sigma$ and the first Chern class of the line bundle supporting the Higgs field, is readily found.

\section{The Maxwell-Chern-Simons-Higgs model on a general surface}\news
\label{sec:MCSH}

Let $(\Sigma,g)$ be a compact oriented Riemannian two-manifold, $(\L,h)$ a hermitian line bundle over $\Sigma$ of degree $n\geq 1$, and $I$ a closed interval containing
$0$. We denote by $\Gamma(\L)$ the space of smooth sections of $\L$, $\Conn(\L)$ the space of metric connexions on $\L$, $\Omega^p$ the space of smooth $p$-forms on $\Sigma$, $*:\Omega^p\ra\Omega^{2-p}$ the Hodge isomorphism, $\d:\Omega^p\ra\Omega^{p+1}$ the exterior derivative and $\dstar=-*\d*$ its
$L^2$ adjoint. To a collection of curves $\phi:I\ra\Gamma(\L)$, $A:I\ra \Conn(\L)$, $A_0:I\ra\Omega^0$, $N:I\ra\Omega^0$ we associate the action
\beq\label{Sdef}
S=S_{\mbox{\tiny YMH}}+\kappa S_{\mbox{\tiny CS}}
\eeq
where
\bea
S_{\mbox{\tiny YMH}}&=&\frac12\int_I\bigg\{
\|D_0\phi\|^2-\|\d_A\phi\|^2+\|e\|^2-\|B\|^2+\|\dot{N}\|^2-\|\d N\|^2\nonumber \\ \label{SYMH}
&&\qquad\qquad\qquad-\frac14\left\|1+2\kappa N-|\phi|^2\right\|^2-\|N\phi\|^2\bigg\}dt
\eea
is a deformed Yang-Mills-Higgs action for which the Higgs field $\phi$ is coupled to the neutral scalar field $N$, 
and
\beq
S_{\mbox{\tiny CS}}=\frac12\int_I\bigg\{\ip{A_0,*B}+\ip{A,*e}\bigg\}dt
\eeq
is the Chern-Simons action.
Here $D_0=\cd_t-iA_0$, $\ip{\cdot,\cdot}$ denotes $L^2$ inner product on $(\Sigma,g)$, $\|\cdot\|$ the associated $L^2$ norm, $\kappa$ is a real parameter, and
$e:I\ra\Omega^1$, $B:I\ra\Omega^2$ are the electric and magnetic fields
\beq
e=\dot{A}-\d A_0,\qquad B=\d A.
\eeq
This is the Maxwell-Chern-Simons-Higgs action on spacetime $(I\times\Sigma,dt^2-g)$ \cite{leeleemin,hankim}. The Euler-Lagrange equations, satisifed by formal critical points of $S$, are
\bea
D_0^2\phi+\dstar_A\d_A\phi&=&\frac12(1+2\kappa N-|\phi|^2)\phi-N^2\phi,\label{EL1}\\
\ddot{N}+\Delta N&=&-\frac\kappa2(1+2\kappa N-|\phi|^2)-|\phi|^2N,\label{EL2}\\
\dot{e}+\dstar B-\kappa*e&=&h(i\phi,\d_A\phi)=:j\label{EL3}\\
-\delta e +\kappa *B&=&h(i\phi,D_0\phi)=:\rho\label{gauss}
\eea
where $j=h(i\phi,\d_A\phi)$ and $\rho=h(i\phi,D_0\phi)$ have the physical interpretation of electric current and charge density,
respectively. Here, and henceforth, $\Delta=\delta\d+\d\delta$ denotes the Hodge Laplacian, which has the opposite sign convention to that favoured by analysts (e.g.\ $\Delta=-(\cd_x^2+\cd_y^2)$ for functions on euclidean $\R^2$). The last of the field equations, 
(\ref{gauss}), obtained by varying $A_0$, is Gauss's law, and should be thought of
as a constraint on initial data which, if satisfied at $t=0$, automatically holds for all $t\in I$. Solutions of this system conserve the total energy
\bea
E&=&
\frac12\bigg\{\|D_0\phi\|^2+\|\d_A\phi\|^2+\|e\|^2
+\|B\|^2+\|\dot{N}\|^2+\|\d N\|^2\nonumber \\
&&\qquad\qquad\qquad
+\frac14\left\|1+2\kappa N-|\phi|^2\right\|^2+\|N\phi\|^2\bigg\}.\label{Edef}
\eea

There is a Bogomol'nyi type topological lower bound on the energy of any (possibly time-dependent) field configuration satisfying Gauss's law (\ref{gauss}). To see this, it is convenient to decompose $\d_A\phi$ into its $(1,0)$ and $(0,1)$ components (with respect to the almost complex structure on $\Sigma$ defined by $g$ and
its orientation),
\beq
\d_A\phi=:\cd_A\phi+\ol\cd_A\phi,
\eeq
and note the standard identity
\beq
*B|\phi|^2=|\cd_A\phi|^2-|\ol{\cd}_{A}\phi|^2.
\eeq
Then, for all fields satisfying (\ref{gauss}),
\bea
0&\leq&\frac12\|\dot{N}\|^2+\frac12\|e+\d N\|^2+\frac12\|*B-\frac12(1+2\kappa N-|\phi|^2)\|^2
+\frac12\|D_0\phi+iN\phi\|^2+\|\ol{\cd}_A\phi\|^2\nonumber\\
&=&E-\frac12\int_\Sigma B+\ip{N,\rho+\delta e-\kappa *B}\nonumber\\
&=&E-\pi n.
\eea
Hence
\beq
E\geq \pi n
\eeq
with equality if and only if
\bea
\dot{N}&=&0\\
e&=&-\d N\label{1}\\
D_0\phi&=&-iN\phi\label{2}\\
\ol{\cd}_{A}\phi&=&0\label{3}\\
*B&=&\frac12(1+2\kappa N-|\phi|^2).\label{4}
\eea
This system simplifies considerably if we make the gauge choice $A_0=N$. Then all fields are static (i.e.\ $t$-independent) and the system, together with (\ref{gauss}),
reduces to
\bea
\ol\cd_A\phi&=&0\label{bog1}\\
*B&=&\frac12(1+2\kappa N-|\phi|^2)\label{bog2}\\
\Delta N+\kappa *B +|\phi|^2 N&=&0\label{bog3}
\eea
a coupled system for $(\phi,A,N)$ which we call the Bogomol'nyi equations. 
It is straightforward to verify that any static solution of
(\ref{bog1})-(\ref{bog3}) satisfies the Euler-Lagrange equations (with $A_0=N$). It is known \cite{hankim} that the converse is false (that is, there exist static solutions of the Euler-Lagrange equations which do not satisfy the Bogomol'nyi equations). 

The rest of this paper concerns existence, regularity and uniqueness of solutions of the Bogomol'nyi equations (\ref{bog1})-(\ref{bog3}). A simple, but crucial, observation is that, for $\kappa=0$ and $\Vol(\Sigma)> 4\pi n$,
(\ref{bog2}) implies $\|\phi\|>0$, whence 
(\ref{bog3}) implies $N=0$, and (\ref{bog1}),(\ref{bog2}) reduce to the standard vortex equations for $\L$, 
\beq
\ol\cd_A\phi=0,\qquad *B=\frac12(1-|\phi|^2),\label{vor}
\eeq
studied in \cite{bra}. Hence, the only solutions of the model with $\kappa=0$ are the usual vortex solutions
augemented by $N=0$. This is unsurprising
given that, when $\kappa=0$ and $N=0$, our action (\ref{Sdef}) reduces to the usual Maxwell-Higgs action.

\section{Existence of vortices}\news

Let $(\phi,A,N)$ be a smooth solution of (\ref{bog1})-(\ref{bog3}). Then, by (\ref{bog1}), $\phi^{-1}(0)\subset\Sigma$ defines 
an effective divisor $D$ of degree $n$ (that is, an unordered collection of $n$ points on $\Sigma$, not necessarily distinct). Conversely,
let such a divisor $D$ be fixed. Then, provided $\Vol(\Sigma)>4\pi n$, the vortex equations (\ref{vor}) have a unique (up to gauge) solution with $\phi^{-1}(0)=D$, and this solution is smooth. Let us denote this solution $(\hat\phi,\hat{A})$. We have already observed that $(\hat\phi,\hat A,N=0)$ is trivially a solution of
(\ref{bog1})-(\ref{bog3}) for $\kappa=0$. Our aim is to prove, via the Implicit Function Theorem, that this trivial solution has a locally unique smooth continuation to $|\kappa|>0$, for all $|\kappa|$ sufficiently small.

\begin{thm} \label{ranhas} Let $D$ be an effective divisor of degree $n$ on $\Sigma$, and assume $\Vol(\Sigma)>4\pi n$. Then there exist $\kappa_*>0$ and a smooth curve
$(\phi,A,N):(-\kappa_*,\kappa_*)\ra \Gamma(\L)\times\Conn(\L)\times\Omega^0$, unique up to gauge, such that, for all $\kappa\in(-\kappa_*,\kappa_*)$,
$(\phi(\kappa),A(\kappa),N(\kappa))$ satisfies the Bogomol'nyi equations and $\phi(\kappa)^{-1}(0)=D$.
\end{thm}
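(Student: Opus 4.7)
The plan is to apply the Implicit Function Theorem in a suitable Banach space setting, expanding about the known solution $(\hat\phi,\hat A,0)$ at $\kappa=0$ whose existence is guaranteed by \cite{bra}. The first job is to gauge-fix (\ref{bog1})--(\ref{bog3}) so as to eliminate the infinite-dimensional gauge redundancy, producing a system of scalar elliptic PDEs to which IFT applies.

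I would use the complex-gauge slice through $(\hat\phi,\hat A)$: every smooth $(\phi,A)$ with $\ol\cd_A\phi=0$ and $\phi^{-1}(0)=D$ is gauge equivalent to a unique pair of the form $\phi=e^{u/2}\hat\phi$, $A=\hat A-\tfrac12 *\d u$, parametrized by a real function $u\in C^\infty(\Sigma)$. In this slice (\ref{bog1}) is identically satisfied, $|\phi|^2=e^u|\hat\phi|^2$, and a short local computation gives $*B=*\hat B+\tfrac12\Delta u$. Combining with the undeformed relation $*\hat B=\tfrac12(1-|\hat\phi|^2)$, the two remaining Bogomol'nyi equations collapse to
\begin{align*}
F_1(u,N,\kappa) &:= \Delta u+(e^u-1)|\hat\phi|^2-2\kappa N=0, \\
F_2(u,N,\kappa) &:= \Delta N+\tfrac{\kappa}{2}\bigl(1+2\kappa N-e^u|\hat\phi|^2\bigr)+e^u|\hat\phi|^2 N=0,
\end{align*}
a smooth nonlinear system in the two real scalars $(u,N)$, with $F_1(0,0,0)=F_2(0,0,0)=0$ by inspection.

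Next I would view $F=(F_1,F_2)$ as a smooth map $C^{2,\alpha}(\Sigma)^2\times\R\to C^{0,\alpha}(\Sigma)^2$ and compute the partial differential $D_{(u,N)}F|_{(0,0,0)}$. Because every $\kappa$- or $N$-dependent cross-term in $F_2$ carries a second factor that vanishes at the origin, the linearization is block-diagonal, each block being the single operator
$$
L:=\Delta+|\hat\phi|^2:C^{2,\alpha}(\Sigma)\to C^{0,\alpha}(\Sigma),
$$
self-adjoint and elliptic with non-negative potential. If $Lf=0$ then $0=\|\d f\|^2+\int_\Sigma|\hat\phi|^2f^2$, so $f$ is constant and vanishes on the open set $\{\hat\phi\neq 0\}$, forcing $f\equiv 0$. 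Combined with ellipticity on the closed surface $\Sigma$ and the Fredholm alternative this promotes $L$ to a topological isomorphism, so $D_{(u,N)}F|_{(0,0,0)}$ is an isomorphism. The smooth IFT then produces $\kappa_*>0$ and a smooth curve $\kappa\mapsto(u(\kappa),N(\kappa))\in C^{2,\alpha}(\Sigma)^2$ solving $F=0$ for $|\kappa|<\kappa_*$, locally unique there, with $(u(0),N(0))=(0,0)$.

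To finish, set $\phi(\kappa)=e^{u(\kappa)/2}\hat\phi$ and $A(\kappa)=\hat A-\tfrac12*\d u(\kappa)$; the divisor identity $\phi(\kappa)^{-1}(0)=D$ is then automatic. Spatial $C^\infty$ regularity of $(u(\kappa),N(\kappa))$ follows by standard elliptic bootstrapping applied to the equations $F_1=F_2=0$ (which are real-analytic in $(u,N)$), and smoothness in $\kappa$ is the smooth IFT. Uniqueness up to gauge follows because any competing smooth solution with zero divisor $D$ can be brought into the complex-gauge slice, where the IFT's local uniqueness applies. The sole non-routine ingredient is the invertibility of $L$; this is what I would flag as the step to watch, but it is genuinely easy here because $|\hat\phi|^2$ is strictly positive off the finite set $D$, and because at $\kappa=0$ the linearization decouples $u$ from $N$, sparing us an analysis of off-diagonal coupling between the two fields.
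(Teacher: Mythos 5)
Your proposal is correct and follows essentially the same route as the paper: gauge-fix to the slice $\phi=e^{u}\hat\phi$ (your $u$ is just a rescaling), reduce the remaining Bogomol'nyi equations to a coupled scalar elliptic system for $(u,N)$, verify that the linearization at $(\kappa,u,N)=(0,0,0)$ is the block-diagonal operator $(\Delta+|\hat\phi|^2)\oplus(\Delta+|\hat\phi|^2)$ with trivial kernel, and invoke the Banach-space Implicit Function Theorem followed by elliptic bootstrapping. The only deviation is cosmetic: you work in H\"older spaces $C^{2,\alpha}$ with the Fredholm alternative, whereas the paper works in Sobolev spaces $H^r$ (using the Banach algebra property and an elliptic estimate to invert $L$), and both settings deliver the same conclusion.
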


\begin{proof} Let $(\hat\phi,\hat{A})$ be the solution of (\ref{vor}), unique up to gauge, with $\hat\phi^{-1}(0)=D$, whose existence was established in \cite{bra}. Every section $\phi$ with $\phi^{-1}(0)=D$ has a unique representative in its gauge orbit of the form $\phi=e^u\hat\phi$, where $u:\Sigma\ra\R$. This section satisfies (\ref{bog1}) if and only if $A=\hat{A}-*\d u$. Then $(\phi,A,N)$ satisfies (\ref{bog2}), (\ref{bog3}) if and only if
\bea
\Delta u+\frac{f}{2}(e^{2u}-1)-\kappa N&=&0\label{ariel1}\\
\Delta N+(\kappa^2+fe^{2u})N+\frac\kappa2(1-fe^{2u})&=&0\label{ariel2}
\eea
where $f:\Sigma\ra[0,\infty)$ is the smooth function $|\hat\phi|^2$ vanishing only on $D$. Clearly, $(\phi,A,N)=(e^u\hat\phi,\hat A-*\d u,N)$ is smooth if $(u,N)$ is. 
We have thus reduced the problem to proving the following Lemma which, for later
convenience, we formulate on an arbitrary compact Riemannian manifold.

\begin{lemma} \label{natgra}
Let $M$ be a compact Riemannian manifold of dimension $m\geq 2$, and $f$ a
smooth non-negative function on $M$ vanishing on a set of measure $0$. Then there exists $\kappa_*>0$ and a smooth
curve $(u,N):(-\kappa_*,\kappa_*)\ra C^\infty(M)\times C^\infty(M)$ with $(u,N)(0)=(0,0)$ such that, for all $\kappa\in
(-\kappa_*,\kappa_*)$,
$(u(\kappa),N(\kappa))$ satisfies (\ref{ariel1}), (\ref{ariel2}). Furthermore,
this curve is locally unique.
\end{lemma}

\begin{proof}
Let $H^k$ denote the space of real functions on $M$ whose derivatives up to order $k$ are
square integrable. This is a Hilbert space with respect to the inner product
\beq
\ip{u,v}_{H^k}=\ip{u,v}+\ip{\nabla u,\nabla v}+\cdots+\ip{\nabla^ku,\nabla^kv},
\eeq
where $\nabla$ denotes the Levi-Civita connexion \cite{aub}. Furthermore, for all $k\geq r$, where
\beq
r:=\left\lfloor\frac{m}{2}\right\rfloor+1,
\eeq
$H^k$ is a Banach algebra under pointwise multiplication: for all $u,v\in H^k$, $uv\in H^k$, and there
exists a constant $C(k)>0$, such that, for all $u,v\in H^k$, 
$\|uv\|_{H^k}\leq C(k)\|u\|_{H^k}\|v\|_{H^k}$. (This fact is well known, though we have been unable to find a proof of it in the literature. The analogous statement for $H^k(\Omega)$, where $\Omega$ is
a domain in $\R^m$ with the cone property, is proved in \cite[pp.\ 115--117]{ada}. This proof uses only Sobolev embeddings,
which hold equally well on a compact Riemannian manifold \cite[pp.\ 35, 44]{aub}, so works,
{\em mutatis mutandis}, in our setting also.) It follows from this that, for all $k\geq r$, the
exponential map $\exp:u\mapsto e^u$ is a smooth mapping $H^k\ra H^k$. One can deduce this directly, by observing that $\exp(u)$ is the limit of the absolutely convergent power series
$\sum_{n=0}^\infty u^n/n!$ in the Banach algebra $H^k$, or appeal to general results on smoothness of
composition maps (see \cite[p.\ 424]{donkro}, for example). 

It follows that $F:\R\oplus H^r\oplus H^r\ra H^{r-2}\oplus H^{r-2}$,
\beq\label{Fdef}
F(\kappa,u,N)=(\Delta u+\frac{f}{2}(e^{2u}-1)-\kappa N,\Delta N+(\kappa^2+fe^{2u})N+\frac\kappa2(1-fe^{2u})),
\eeq
is a smooth map between Banach spaces. To see this, one notes that $F$ is a composition of the maps $\Delta:H^r\ra H^{r-2}$,
multiplication by $f$
($H^r\ra H^r$, $u\mapsto fu$), projections (${\rm pr}_{1,2}:H^k\oplus H^{k'}\ra H^k,H^{k'}$), inclusions ($H^k\ra H^k\oplus H^{k'}$, $u\mapsto(u,0)$,   $H^k\ra H^{k'}\oplus H^{k}$, $u\mapsto(0,u)$, $H^k\hra H^{k-1}$), $\exp:H^r\ra H^r$, and the multiplication map ($\mu:H^r\oplus H^r\ra H^r$,
$\mu:\R\oplus H^r\ra H^r$). All but the last two of these are bounded linear, hence trivially smooth, and we have already observed that $\exp$ is smooth. Clearly $\mu$, being bilinear, is continuous. Its differential
$\d\mu_{(u,v)}(u',v')=\mu(u,u')+\mu(v,v')$, is a bounded linear map (for each fixed $(u,v)$) which is continuous, by continuity of $\mu$, so $\mu$ is $C^1$, and hence smooth by the chain rule. 

By construction, our PDE system (\ref{ariel1}),(\ref{ariel2}) is equivalent to
$F(\kappa,u,N)=(0,0)$. Let $F|:H^r\oplus H^r\ra H^{r-2}\oplus H^{r-2}$, $F|(u,N)=F(0,u,N)$.
Now $F(0,0,0)=(0,0)$, and $dF|_{(0,0)}:H^r\oplus H^r\ra H^{r-2}\oplus H^{r-2}$ is $L\oplus L$ where $L:H^r\ra H^{r-2}$ is the second order linear elliptic operator
\beq
L=\Delta+f.
\eeq
Since $f\geq0$ and vanishes only on a measure zero set, if $Lu=0$ then
\beq
0=\ip{u,\Delta u}+\ip{u,fu}\geq \ip{u,fu}=\|\sqrt{f}u\|^2\geq 0
\eeq
so $u=0$ almost everywhere. Hence $\ker L=\{0\}$, so, by the standard elliptic estimate for $L$ 
\cite[p.\ 423]{donkro}, there exists $C>0$ such that, for all $u\in H^r$,
\beq
\|u\|_{H^r}\leq C\|Lu\|_{H^{r-2}}.
\eeq
It follows \cite[p.\ 42]{hilphi} that $L$ is invertible, that is, $L$ is bijective and its inverse $L^{-1}:H^{r-2}\ra H^r$ is a bounded linear map. Hence, $\d F|_{(0,0)}$ is likewise invertible, with inverse $L^{-1}\oplus L^{-1}$. Thus, we may apply the Implicit Function Theorem \cite[p.\ 420]{donkro}
to $F$ at the point $(0,0,0)$: there exist $\kappa_*>0$ and a 
unique smooth map $x:(-\kappa_*,\kappa_*)\ra H^r\oplus H^r$ such that $x(0)=(0,0)$ and, for all $\kappa\in (-\kappa_*,\kappa_*)$, $F(\kappa,x(\kappa))=0$. This establishes existence of a unique smooth curve of solutions $x(\kappa)=(u(\kappa),N(\kappa))$ of (\ref{ariel1}),(\ref{ariel2}) in $H^r\times H^r$ with $u(0)=N(0)=0$.

Up to this point, we know that the curve of solutions $(u(\kappa),N(\kappa))$ is smooth with respect to $\kappa$, but we do not yet know that, for fixed $\kappa$, each of $u(\kappa)$, $N(\kappa)$ is a smooth function on $M$. This follows from a standard bootstrap argument. Dropping the argument $\kappa$ from $u(\kappa)$, $N(\kappa)$, we know that they lie in $H^r$ and satisfy (\ref{ariel1}),
(\ref{ariel2}).  Hence, by the standard elliptic estimate for $\Delta$ \cite[p.\ 423]{donkro}
 if $u,N\in H^{k\geq r}$, then
\beq
\|u\|_{H^{k+1}}\leq C(\|\Delta u\|_{H^{k-1}}+\|u\|)\leq C(\|e^{2u}\|_{H^{k-1}}+1+|\kappa|\|N\|_{H^k}+\|u\|_{H^k})<\infty,
\eeq
using (\ref{ariel1}), so $u\in H^{k+1}$,
and
\beq
\|N\|_{H^{k+1}}\leq C(\|\Delta N\|_{H^{k-1}}+\|N\|)\leq C(\kappa^2\|N\|_{H^k}+\|e^{2u}N\|_{H^{k}}+|\kappa|+|\kappa|\|e^{2u}\|_{H^{k}})<\infty
\eeq
using (\ref{ariel2}) and the algebra property of $H^k$. Hence, by induction on $k$, $(u,N)\in H^k\oplus H^k$ for all $k\geq r$. But, for all $k\geq r$ we have the continuous Sobolev embedding
$H^k\hra C^{k-r}$
\cite[pp.\ 35, 44]{aub}:
every $u\in H^k$ is $(k-r)$ times continuously differentiable, and there
exists $C(k)>0$ such that, for all $u\in H^k$,
\beq\label{sob}
\|u\|_{C^{k-r}}=\sup_{x\in M}\{|u(x)|, |\nabla u(x)|,\ldots,|\nabla^ku(x)\}\leq C(k)\|f\|_{H^k}.
\eeq
So $U,N\in C^{k-r}$ for all $k\geq r$, and hence are smooth.
\end{proof}

This completes the proof of Theorem \ref{ranhas}.
\end{proof} 

\section{Bounds on $\kappa_*$}\news

In this section we find upper and lower bounds on $\kappa_*$, the maximal Chern-Simons coupling to which we can smoothly continue a standard vortex. In principle, $\kappa_*$ may depend on $D$, the choice of vortex positions. We shall show that, for fixed $n$, there are global upper and lower bounds on $\kappa_*(D)$, independent of $D$. Much the simpler is the upper bound.

\begin{thm}\label{thm2} Assume that the Bogomol'nyi equations (\ref{bog1}), (\ref{bog2}), (\ref{bog3}) have a smooth solution. Then 
$$
\kappa^2\leq \frac{\Vol(\Sigma)}{4\pi n}-1.
$$
\end{thm}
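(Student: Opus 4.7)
The plan is to extract everything from two integral identities obtained from (\ref{bog2}) and (\ref{bog3}). First I would integrate (\ref{bog2}) over $\Sigma$. Since $\int_\Sigma B=2\pi n$ (the standard first-Chern-class computation), this yields
\beq
\|\phi\|^2=\Vol(\Sigma)-4\pi n+2\kappa\int_\Sigma N. \label{plan1}
\eeq
Next, integrating (\ref{bog3}) and noting that $\int_\Sigma\Delta N=0$ gives
\beq
\int_\Sigma|\phi|^2 N=-2\pi n\kappa. \label{plan2}
\eeq
Already from (\ref{plan1}), together with $\|\phi\|^2\geq 0$, one recovers the Bradlow bound at $\kappa=0$; the task is to improve it using $N$.

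The key step is to control the sign of $\kappa\int_\Sigma N$. I would multiply (\ref{bog3}) by $N$ and integrate by parts, getting
\beq
\|\d N\|^2+\||\phi|N\|^2+\kappa\int_\Sigma N\,{*}B=0.
\eeq
Separately, pairing (\ref{bog2}) with $2\kappa N$ and using (\ref{plan2}) evaluates the mixed term:
\beq
\kappa\int_\Sigma N\,{*}B=\frac{\kappa}{2}\int_\Sigma N+\kappa^2\|N\|^2+\pi n\kappa^2.
\eeq
Substituting this into the previous identity and throwing away the manifestly non-negative pieces $\|\d N\|^2$, $\||\phi|N\|^2$ and $\kappa^2\|N\|^2$ yields the one-sided bound
\beq
\frac{\kappa}{2}\int_\Sigma N\leq -\pi n\kappa^2,\qquad\text{i.e.}\qquad 2\kappa\int_\Sigma N\leq -4\pi n\kappa^2.
\eeq
Feeding this into (\ref{plan1}) and again using $\|\phi\|^2\geq 0$ gives $\Vol(\Sigma)\geq 4\pi n(1+\kappa^2)$, which is the claimed inequality.

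The proof is essentially just integration-by-parts bookkeeping, so there is no serious analytic obstacle. The only non-routine point is spotting the right combination: one has to notice that after multiplying (\ref{bog3}) by $N$, the awkward term $\kappa\int N\,{*}B$ can be re-expressed purely in terms of $\int N$, $\|N\|^2$ and the already-known integral (\ref{plan2}) by pairing (\ref{bog2}) with $2\kappa N$. Once that combination is made, all remaining manipulations are sign-tracking.
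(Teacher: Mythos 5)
Your proof is correct and follows essentially the same route as the paper: the four ingredients (integrating (\ref{bog2}) and (\ref{bog3}) over $\Sigma$, and pairing each of them with $N$ in $L^2$) are exactly the identities (\ref{ft5}), (\ref{ft2.5}), (\ref{ft3}), (\ref{ft4}) used there, merely recombined in a slightly different order. The only cosmetic difference is that your bookkeeping keeps everything in terms of $\kappa\int_\Sigma N$ and manifestly non-negative quantities, so you never need the paper's preliminary reduction to $\kappa\geq 0$ via the symmetry $(\kappa,N)\mapsto(-\kappa,-N)$.
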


\begin{proof} It is clear that $(\kappa,\phi,A,N)$ satisfies the Bogomol'nyi equations if and only if $(-\kappa,\phi,A,-N)$ does, so we may assume, without loss of generality, that
$\kappa\geq 0$. Integrating (\ref{bog3}) over $\Sigma$, we see that
\beq\label{ft2.5}
\ip{N,|\phi|^2}=-\kappa\int_\Sigma B=-2\pi\kappa n.
\eeq
The $L^2$ inner product of (\ref{bog3}) with $N$ implies
$\|\d N\|^2+\|N\phi\|^2+\kappa\ip{*B,N}=0$, whence
\beq\label{ft4}
\ip{*B,N}\leq0.
\eeq
Similarly, the $L^2$ inner product of (\ref{bog2}) with $N$ implies
\beq\label{ft3}
\ip{*B,N}=\frac12\int_\Sigma{N}-\frac12\ip{N,|\phi|^2}+\kappa\|N\|^2,
\eeq
and the integral of (\ref{bog2}) over $\Sigma$ yields
\beq\label{ft5}
\kappa\int_\Sigma N\geq 2\pi n -\frac12\Vol(\Sigma).
\eeq
Hence, by (\ref{ft4}),
\bea
0\geq\kappa\ip{*B,N}&\geq&\frac\kappa2\int_\Sigma N-\frac\kappa2\ip{N,|\phi|^2}\qquad\mbox{by (\ref{ft3})} \nonumber \\
&=&\frac\kappa2\int_\Sigma N+\pi\kappa^2n\qquad\mbox{by (\ref{ft2.5})} \nonumber \\
&\geq&\pi n - \frac14\Vol(\Sigma)+\pi\kappa^2 n\qquad\mbox{by (\ref{ft5})}\nonumber
\eea
and the claim immediately follows.
\end{proof}

\begin{remark}
In the case where $\Sigma$ is a flat torus, Ricciardi and Tarantello \cite{rictar}
prove a considerably stronger upper bound on $\kappa$, equivalent, in our conventions, to
\beq\label{rictarbounds}
\Vol(\Sigma)\geq 4\pi n\qquad\mbox{and}\qquad
\kappa^2 \leq \frac{\Vol(\Sigma)}{16\pi n}\left(1-\frac{4\pi n}{\Vol(\Sigma)}\right)^2.
\eeq
There is no reason to suppose that their proof cannot be adapted to deal with the case of general $\Sigma$, so one expects this stronger bound to be true more 
generally. The proof, however, requires pointwise control of $|\phi|$ and $N$ and so is, inevitably, much longer and more difficult than the proof of Theorem
\ref{thm2}. Since neither bound is likely to be sharp, we shall not attempt to generalize their argument here. It is the {\em existence} of an upper bound on $\kappa_*$ which is of primary interest to us.
\end{remark}

We turn now to the existence of a positive {\em lower} bound on $\kappa_*$, independent of $D$. 
Recall that the set $B(V,W)$ of bounded linear maps between Banach spaces $V,W$ is itself a Banach space with respect to the norm $\|A\|_{V\ra W}=
\sup\{\|Au\|_W\: :\: \|u\|_V=1\}$. We will omit the subscript $V\ra W$ where no confusion is possible. Let
\beq\label{csobdef}
\csob=\|\iota\|_{H^2\ra L^2},
\eeq the norm of the inclusion
$\iota:H^2\ra C^0$, or, equivalently, the optimal (smallest) constant in (\ref{sob}) for $k=m=2$. The space $\M_n$ of effective divisors of degree $n$ on $\Sigma$ is a compact topological space homeomorphic to $\sym{n}\Sigma$, the $n$-fold symmetric product of $\Sigma$. Assuming $4\pi n< \Vol(\Sigma)$, let $f_D\in\Omega^0$ denote, for
each $D\in \M_n$,  the squared
length of the Higgs field $\hat\phi$ of the unique (up to gauge) standard vortex vanishing on $D$. The map $\M_n\ra C^0$, $D\mapsto f_D$ is continuous \cite{garpra},
from which it immediately follows that
\beq
\Ll:\M_n\ra B(H^2,L^2),\qquad D\mapsto L_D=\Delta+f_D
\eeq
is continuous. 
Note that $L_D$ is the linear map we previously called $L$, introduced in the proof of Lemma \ref{natgra}. The new notation calls attention to the fact that this operator depends on the divisor $D$. As previously argued, each operator 
$L_D$ is 
invertible, and inversion is a continuous map \cite[p.\ 170]{bol} from the subset of invertible operators in $B(V,W)$ to $B(W,V)$, so it follows that
\beq
\Ll':\M_n\ra B(L^2,H^2),\qquad D\mapsto L_D^{-1}
\eeq
is also continuous. Since $\M_n$ is compact, $\Ll'(\M_n)$ is bounded, that is
\beq\label{C*def}
C_*=\sup_{D\in\M_n}\|L_D^{-1}\|<\infty.
\eeq
Note that $C_*>0$ depends only on $(\Sigma,g)$ and $n$. It would be useful to have a more explicit upper bound on $C_*$ but we have been unable to find one. By contrast, 
a simple bound on $\Ll(\M_n)$ is easily obtained. 
For all $D$,
\beq
\|L_D\|\leq \|\Delta\|+\sup\{\|f_D u\|\: :\: \|u\|_{H^2}=1\}\leq
\sqrt{2}+\csob\|f_D\|\leq \sqrt{2}+\csob\cvol,
\eeq
where
\beq
\cvol:=\sqrt{\Vol(\Sigma)-4\pi n},
\eeq
since, by (\ref{vor}),
\beq\label{fest}
\|f_D\|^2=\int_\Sigma(1-2*B)^2=\Vol(\Sigma)-4\int_\Sigma B+4\|B\|^2\leq\Vol(\Sigma)-8\pi n+4E=\cvol^2.
\eeq

Having introduced the constants $\csob$ (depending only on $(\Sigma,g)$) and $C_*,\cvol$ (depending also on $n$), we can establish the existence of a global (i.e.\ independent of $D$) lower bound on $\kappa_*$. For fixed $D\in \M_n$, let $F:\R\oplus H^2\oplus H^2\ra L^2\oplus L^2$ be the smooth mapping between Banach spaces defined in equation (\ref{Fdef}), and for fixed $\kappa\in\R$, let $F_\kappa|$ denote the map $(u,N)\mapsto F(\kappa,u,N)$. The first step is to show that $\d F_\kappa|$ is uniformly invertible on a small ball, of radius independent of $D$.

\begin{lemma}\label{fma} Let $D\in\M_n$ and $\Vol(\Sigma)>4\pi n$. Then there exists a constant $\eps\in(0,1]$, independent of $D$, such that, for all $\kappa\in(-\eps,\eps)$,
and $(u,N)\in B_\eps(0)\subset H^2\oplus H^2$, the linear map $\d F_\kappa|_{(u,N)}:H^2\oplus H^2\ra L^2\oplus L^2$ is invertible, and
$$
\|(\d F_\kappa|_{(u,N)})^{-1}\|\leq 2C_*.
$$
\end{lemma}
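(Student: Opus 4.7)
The plan is to write $dF_\kappa|_{(u,N)} = L_D\oplus L_D + R_{\kappa,u,N}$, where $R_{\kappa,u,N}$ is an ``error'' operator, and then show that for $(\kappa,u,N)$ in a ball of uniform (in $D$) radius, $R_{\kappa,u,N}$ is small enough in operator norm to apply a Neumann series / perturbation-of-invertible-operators argument. Since $\|(L_D\oplus L_D)^{-1}\| = \|L_D^{-1}\|\leq C_*$ uniformly in $D$ by (\ref{C*def}), if we can arrange $\|R_{\kappa,u,N}\|_{H^2\oplus H^2\to L^2\oplus L^2}\leq 1/(2C_*)$, the standard estimate
$$
\|(dF_\kappa|_{(u,N)})^{-1}\|\leq \frac{\|(L_D\oplus L_D)^{-1}\|}{1-\|(L_D\oplus L_D)^{-1}\|\cdot\|R_{\kappa,u,N}\|}\leq 2C_*
$$
will deliver the conclusion.

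Concretely, a direct computation of $dF_\kappa|_{(u,N)}(u',N')$, together with the identity $dF_0|_{(0,0)}(u',N')=(L_Du',L_DN')$, gives
$$
R_{\kappa,u,N}(u',N') = \bigl(f_D(e^{2u}-1)u' - \kappa N',\ \kappa^2 N' + f_D(e^{2u}-1)N' + f_De^{2u}(2N-\kappa)u'\bigr).
$$
The key estimate is then to bound each term in $L^2$ by (a constant) $\times \eps\times\|(u',N')\|_{H^2\oplus H^2}$, uniformly in $D$, whenever $|\kappa|<\eps$ and $\|u\|_{H^2},\|N\|_{H^2}<\eps\leq 1$. For the $f_D$-containing terms I would estimate $\|f_D g\|_{L^2}\leq \|f_D\|_{L^2}\|g\|_{L^\infty}$, use the Sobolev embedding $\|w\|_{L^\infty}\leq \csob\|w\|_{H^2}$ (which, for $m=2$, is precisely the inclusion $H^2\hookrightarrow C^0$ used in defining $\csob$), and the elementary bound $|e^{2u}-1|\leq 2e|u|$ valid whenever $|u|\leq \frac12$ (which is guaranteed for $\eps\leq 1/(2\csob)$). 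This produces a bound of the form $\|R_{\kappa,u,N}\|\leq K\eps$, where $K$ depends only on $\csob$ and the uniform bound $\|f_D\|_{L^2}\leq \cvol$ established in (\ref{fest}); in particular $K$ is independent of $D\in \msn$.

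Choosing $\eps=\min\{1,1/(2\csob),1/(2KC_*)\}$ then yields $\|R_{\kappa,u,N}\|\leq 1/(2C_*)$ on the ball $B_\eps(0)$ for $|\kappa|<\eps$, and the Neumann-series argument above concludes the proof. The step I expect to need the most care is ensuring $D$-independence of the constant $K$: since we only have an $L^2$ (not $L^\infty$) bound on $f_D$ that is uniform in $D$ (via (\ref{fest})), the estimates must be arranged so that $f_D$ is always paired against a factor controlled in $L^\infty$ via the $H^2\hookrightarrow C^0$ embedding, rather than against another $L^2$ factor. Everything else (the $\kappa N'$ and $\kappa^2N'$ terms, and the $e^{2u}$ factor which is bounded by $e$ on the small ball) contributes harmlessly at order $\eps$.
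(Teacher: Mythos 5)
Your proposal is correct and follows essentially the same route as the paper: the identical decomposition $\d F_\kappa| = L_D\oplus L_D + T$, the same pairing of the $L^2$-bound $\|f_D\|\leq \cvol$ from (\ref{fest}) against $C^0$-controlled factors via the embedding constant $\csob$, a Lipschitz-type bound on $e^{2u}-1$ on the small ball, and the same Neumann-series perturbation argument using the uniform bound $\|L_D^{-1}\|\leq C_*$ to get $\|(\d F_\kappa|)^{-1}\|\leq 2C_*$ with an $\eps$ independent of $D$. The only differences are cosmetic choices of constants (e.g.\ $2e|u|$ versus $2e^{2\csob}|u|$ in bounding $e^{2u}-1$).
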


\begin{proof} Choose $\kappa\in[-1,1]$ and $u,N\in H^2$ with $\|u\|_{H^2}\leq 1$, $\|N\|_{H^2}\leq 1$. Then $\d F_{\kappa}|=L_D\oplus L_D+T$, where
\beq
T(u',N')=(f_D(e^{2u}-1)u'-\kappa N',f_D(e^{2u}-1)N'+\kappa^2N'+2f_De^{2u}Nu'-\kappa f_De^{2u}u').
\eeq
Now
\bea
\|T(u',N')\|_{L^2\oplus L^2}&\leq&\|e^{2u}-1\|_{C^0}(\|u'\|_{C^0}+\|N'\|_{C^0})\|f_D\|+2|\kappa|\|N'\|+\nonumber \\
&&2\|e^{2u}\|_{C^0}\|N\|_{C^0}\|u'\|_{C^0}\|f_D\|+|\kappa|\|e^{2u}\|_{C^0}
\|u'\|_{C^0}\|f_D\|.
\eea
Since $\|u\|_{H^2}\leq 1$, $-\csob\leq u(p)\leq \csob$ for all $p\in\Sigma$, whence
\beq
\|e^{2u}-1\|_{C^0}\leq 2e^{2\csob}\|u\|_{C^0}\leq C\|u\|_{H^2},
\eeq
where $C=2\csob e^{2\csob}$.
It follows that
\bea
\|T\|&=&\sup\{\|T(u',N')\|\: :\: \|u'\|_{H^2}^2+\|N'\|_{H^2}^2=1\}\nonumber \\
&\leq&
C\|u\|_{H^2}(\csob+\csob)\|f_D\|+2|\kappa|+C\|N\|_{H^2}\csob\|f_D\|+C|\kappa|\|f_D\|\nonumber\\
&\leq& 2 C\csob\cvol(\|u\|_{H^2}+\|N\|_{H^2})+(2+C\cvol)|\kappa|,\label{Test}
\eea
by (\ref{fest}). Let 
\beq\label{epsdef}
\eps=\min\left\{1, \frac{C_*^{-1}}{16C\csob\cvol},\frac{C_*^{-1}}{8+4C\cvol}\right\}>0,
\eeq
and note that this is independent of $D$. Then, for all $(\kappa,(u,N))\in (-\eps,\eps)\times B_{\eps}(0)$, since $\eps\leq 1$ the estimate above, (\ref{Test}),
 for $\|T\|$ holds, and so
\beq
\|T\|<\frac1{2C_*}\leq\frac12\frac{1}{\|L_D^{-1}\|}=\frac12\frac{1}{\|(L_D\oplus L_D)^{-1}\|}.
\eeq
It follows that $\|(L_D\oplus L_D)^{-1}T)\|<\frac12$, and
hence \cite[p.\ 169]{bol} that
\beq
\d F_{\kappa}|=(L_D\oplus L_D)(1+(L_D\oplus L_D)^{-1}T)
\eeq
is invertible with inverse
\beq
(\d F_{\kappa}|)^{-1}=\left(\sum_{k=0}^\infty (-1)^k((L_D\oplus L_D)^{-1}T)^k\right)(L_D\oplus L_D)^{-1},
\eeq
whose norm satisfies
\bea
\|(\d F_{\kappa}|)^{-1}\|&\leq&\left(\sum_{k=0}^\infty \frac{1}{2^k}\right)\|(L_D\oplus L_D)^{-1}\|=2\|L_D^{-1}\|\leq 2C_*.
\eea
\end{proof}

\begin{thm} Assume $\Vol(\Sigma)>4\pi n$.
For each $D\in\M_n$, let $(-\kappa_*(D),\kappa_*(D))>0$ be the maximal open interval on which a smooth curve of solutions of the Bogomol'nyi equations with
$\phi^{-1}(0)=D$ exists. Then
$$
\kappa_*(D)\geq\eps\min\left\{1,\frac{1}{C_*(6+(1+e^{2\csob})\sqrt{\Vol(\Sigma)})}\right\}
$$
where $\csob,C_*,\eps$ are the positive constants, independent of $D$, defined in (\ref{csobdef}),(\ref{C*def}),(\ref{epsdef}).
\end{thm}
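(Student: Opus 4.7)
The plan is to convert the parametrized equation $F(\kappa, u(\kappa), N(\kappa)) = 0$ into a first-order ODE in $H^2 \oplus H^2$ for the solution curve supplied by Theorem \ref{ranhas}, and then integrate this ODE using the uniform invertibility bound of Lemma \ref{fma} to show that the curve cannot exit the ball $B_\eps(0)$ before $|\kappa|$ reaches the claimed bound. Differentiating $F=0$ in $\kappa$ gives
\begin{equation*}
\d F_\kappa|_{(u,N)}(u'(\kappa), N'(\kappa)) = -\cd_\kappa F(\kappa, u, N),
\end{equation*}
where a direct computation from (\ref{Fdef}) yields $\cd_\kappa F(\kappa, u, N) = (-N,\, 2\kappa N + \tfrac{1}{2}(1 - f_D e^{2u}))$. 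So long as $(u(\kappa), N(\kappa)) \in B_\eps(0)$ and $|\kappa| < \eps$, Lemma \ref{fma} supplies $\|(\d F_\kappa|_{(u,N)})^{-1}\| \leq 2C_*$, whence the curve has speed bounded by $\|(u'(\kappa), N'(\kappa))\|_{H^2 \oplus H^2} \leq 2C_* \|\cd_\kappa F(\kappa, u, N)\|_{L^2 \oplus L^2}$.

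The next step is a $D$-independent bound on the right-hand side. Since $\|u\|_{H^2} \leq \eps \leq 1$ gives $\|u\|_{C^0} \leq \csob$ by Sobolev embedding, one has $\|e^{2u}\|_{C^0} \leq e^{2\csob}$; combining this with $\|f_D\|_{L^2} \leq \cvol \leq \sqrt{\Vol(\Sigma)}$ from (\ref{fest}) yields $\|1 - f_D e^{2u}\|_{L^2} \leq (1 + e^{2\csob})\sqrt{\Vol(\Sigma)}$, and assembling the remaining terms with $\|N\|_{L^2} \leq 1$ and $|\kappa| \leq 1$ produces $\|\cd_\kappa F\|_{L^2 \oplus L^2} \leq 3 + \tfrac{1}{2}(1 + e^{2\csob})\sqrt{\Vol(\Sigma)}$. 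Hence the curve speed is at most $M := C_*(6 + (1 + e^{2\csob})\sqrt{\Vol(\Sigma)})$, and while $(u, N)$ remains in $B_\eps$ and $|\kappa| \leq \eps$, the fundamental theorem of calculus gives $\|(u(\kappa), N(\kappa))\|_{H^2 \oplus H^2} \leq |\kappa|\,M$.

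Set $\kappa_{**} := \eps \min\{1, 1/M\}$, which is exactly the lower bound claimed. To conclude, I would run a standard continuation argument: suppose for contradiction that $\kappa_*(D) < \kappa_{**}$ (the negative-$\kappa$ case is symmetric under $(\kappa,\phi,A,N)\mapsto(-\kappa,\phi,A,-N)$). Then on $[0, \kappa_*(D))$ the integrated ODE bound forces $(u(\kappa), N(\kappa))$ to lie strictly inside $B_\eps$, and the uniform Lipschitz bound lets $(u, N)$ be extended continuously to $\kappa_*(D)$ with limit $(u_*, N_*) \in B_\eps$. Passing to the limit in $F(\kappa, u, N) = 0$ gives $F(\kappa_*(D), u_*, N_*) = 0$, Lemma \ref{fma} supplies invertibility of $\d F_{\kappa_*(D)}|_{(u_*, N_*)}$, and the Implicit Function Theorem produces a smooth continuation past $\kappa_*(D)$, contradicting maximality. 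The main delicate point is the simultaneous bookkeeping of the two independent constraints $|\kappa| \leq \eps$ and $(u, N) \in B_\eps$ that govern the reach of Lemma \ref{fma}; these are exactly reconciled by the two factors inside the $\min$ defining $\kappa_{**}$.
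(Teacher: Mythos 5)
Your proposal is correct and follows essentially the same route as the paper: you recast the solution curve as obeying $\dot x=-(\d F_\kappa|_{x})^{-1}\cd_\kappa F$, invoke Lemma \ref{fma} for the uniform bound $2C_*$ on the inverse, and bound $\|\cd_\kappa F\|_{L^2\oplus L^2}\leq 3+\tfrac12(1+e^{2\csob})\sqrt{\Vol(\Sigma)}$ exactly as in the paper, arriving at the same constant. The only cosmetic difference is that the paper phrases the conclusion via an exit-time/minimum-speed argument while you integrate the speed bound and spell out the limit-plus-IFT continuation step (which the paper leaves implicit); the substance is identical.
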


\begin{proof}
Denote by $x(\kappa)=(u(\kappa),N(\kappa))$ the solution curve in $H^2\oplus H^2$ whose existence was established in Theorem \ref{ranhas}. This curve starts at $x(0)=0$ and exists whilever $\d F_\kappa|_{x(\kappa)}$
is invertible. By Lemma \ref{fma}, $\d F_\kappa|_{x(\kappa)}$ is invertible if $|\kappa|<\eps$ and $\|x(\kappa)\|_{H^2\oplus H^2}<\eps$. Hence, either $\kappa_*(D)\geq
\eps$, or the curve exits the ball $B_\eps(0)\subset H^2\oplus H^2$ at some ``time" $\kappa'\leq\kappa_*(D)<\eps$. In the latter case, its ``speed", must, at some time prior
to its first exit,
be at least $\eps/\kappa_*(D)$. That is, there exists $0<\kappa_1<\kappa_*(D)<\eps$ such that $x(\kappa_1)\in B_\eps(0)$ and
\beq
\|\dot{x}(\kappa_1)\|_{H^2\oplus H^2}\geq\frac\eps{\kappa_*(D)}.
\eeq
Now $F(\kappa,x(\kappa))=0$ for all $\kappa$, so $x(\kappa)$ satisfies the ODE
\beq
\dot{x}(\kappa)=-(dF_{\kappa}|_{x(\kappa)})^{-1}\frac{\cd F}{\cd\kappa}\bigg|_{\kappa,x(\kappa)}.
\eeq
Hence
\beq
\|\dot{x}(\kappa_1)\|\ignore{\leq \|(dF_{\kappa_1}|_{x(\kappa_1)})^{-1}\|\left\|\frac{\cd F}{\cd\kappa}\bigg|_{\kappa_1,x(\kappa_1)}\right\|_{L^2\oplus L^2}}
\leq 2C_*\left\|\frac{\cd F}{\cd\kappa}\bigg|_{\kappa_1,x(\kappa_1)}\right\|_{L^2\oplus L^2}
\eeq
by Lemma \ref{fma}. Now
\beq
\frac{\cd F}{\cd\kappa}=(-N,2\kappa N+\frac12(1-f_De^{2u}))
\eeq
so
\bea
\left\|\frac{\cd F}{\cd\kappa}\bigg|_{\kappa_1,x(\kappa_1)}\right\|_{L^2\oplus L^2}&\leq& (1+2\kappa_1)\eps+\frac12(\|1\|+\|e^{2u}\|_{C^0}\|f_D\|)\nonumber \\
&\leq& 3+\frac12(1+e^{2\csob})\sqrt{\Vol(\Sigma)}.
\eea
Hence, either $\kappa_*(D)\geq\eps$, or
\bea
\frac\eps{\kappa_*(D)}\leq 2C_*(3+\frac12(1+e^{2\csob})\sqrt{\Vol(\Sigma)}),
\eea
which establishes the claim.
\end{proof}

\section{Vortices on $S^2$: a numerical study}
\news

In order to study the dependence of $\kappa_*(D)$ on the divisor $D$, we consider the case where $\Sigma$ is the round sphere of radius $R > \sqrt{n}$ and use numerical techniques to investigate the deformed solutions away from  $\kappa=0$. Now $(\L,h)$ is a degree $n\geq 1$ hermitian line bundle over $S^2$. Let $U_\pm=S^2\less\{(0,0,\mp1)\}$
and $\eps_\pm$ be unit length sections of $\L$ on $U_\pm$ such that $\eps_-=e^{in\theta}\eps_+$ on $U_+\cap U_-$, where $\theta$ is the azimuthal angle around the North-South axis on $S^2$. Then any section $\phi$ of $\L$ is determined by a pair of functions $\phi_\pm:U_\pm\ra\C$ satisfying 
\beq\label{ranhas1}
\phi_+=e^{in\theta}\phi_-
\eeq
on $U_+\cap U_-$. A unitary connexion on $\L$ is represented by a pair of real one forms $A_\pm\in\Omega^1(U_\pm)$ satisfying
\beq\label{ranhas2}
A_+=A_-+n\d\theta
\eeq
 on $U_+\cap U_-$.  On $U_\pm$ we use the stereographic coordinate obtained by projection from $(0,0,\mp 1)$, denoted $z_\pm=r_\pm e^{i\theta_\pm}$. Of course $\theta=\theta_+$, and 
$z_-=1/z_+$ on $U_+\cap U_-$, so $\theta_-=-\theta_+$.

Consider the case where $D=n_+(0,0,1)+n_-(0,0,-1)$, that is, the North pole with multiplicity $n_+\geq 0$ and the 
South pole with multiplicity $n_-\geq 0$, where $n_++n_-=n$. Then, by the rotational equivariance of the system,  we may choose gauge so that
\beq
\phi_\pm=f_\pm(r_\pm)e^{in_\pm\theta_\pm},\quad A_\pm=a_\pm(r_\pm)\d\theta_\pm.
\eeq
We also use $N_\pm$ to denote the restriction of $N$ to $U_\pm$, considered as a function of $r_\pm$. The
Bogomol'nyi equations (\ref{bog1}), (\ref{bog2}), (\ref{bog3}) then reduce to a system of ODEs, namely,
\begin{align}
\frac{df_\pm}{dr} &= \frac{1}{r}\left(n_\pm-a_\pm(r)\right)f_\pm(r)\nonumber \\
\frac{da_\pm}{dr} &= \frac{r}{2}\Omega(r)\left(1+2\kappa N_\pm(r) - f_\pm(r)^2\right)\nonumber \\
\frac{d^2N_\pm}{dr^2} &= \Omega(r)\left(N_\pm(r)(\kappa^2 + f_\pm(r)^2) + \frac{\kappa}{2}\left(1-f_\pm(r)^2\right)\right) -\frac{1}{r}\frac{dN_\pm}{dr},\label{ranhas3}
\end{align}
where $\Omega(r) = \frac{4R^2}{(1+r^2)^2}$ is the conformal factor on $S^2$. We must solve these six ODEs on the interval $[0,1]$ subject to matching conditions at $r=1$ (corresponding, in both coordinate patches, to the equator of $S^2$) determined by (\ref{ranhas1}), (\ref{ranhas2}). These are
\beq\label{ranhas4}
f_-(1)=f_+(1),\quad a_-(1)-n_-=n_+-a_+(1),\quad N_-(1)=N_+(1),\quad N_-'(1)=-N_+'(1).
\eeq
Owing to the coordinate singularities of the system (\ref{ranhas3}) at $r=0$, we must step away to $r=\delta>0$, small, by constructing a power series solution of (\ref{ranhas3}) about $0$. One finds that
\bea
f_\pm(r)&=& q_\pm r^{n_\pm}+\cdots,\\
a_\pm(r)&=& \left\lbrace\begin{array}{lr}
(1+2\kappa p_\pm-q_\pm^2)R^2r^2 +\cdots & \text{if $n_\pm$ = 0}\\
(1+2\kappa p_\pm)R^2r^2 +\cdots & \text{otherwise}
\end{array}\right.\\
N_\pm(r)&=& \left\lbrace\begin{array}{lr}
p_\pm + \frac{1}{2}(\kappa + 2\kappa^2p_\pm - \kappa q_\pm^2 + 2p_\pm q_\pm^2)R^2r^2 +\cdots & \text{if $n_\pm$ = 0}\\
p_\pm + \frac{1}{2}(\kappa + 2\kappa^2p_\pm)R^2r^2 +\cdots & \text{otherwise}
\end{array}\right.
\eea
so solutions on $[\delta,1]$ are uniquely determined by four shooting parameters $Z=(q_+,q_-,p_+,p_-)$. Finding a global solution on $S^2$ then corresponds to finding a zero of the map $M:\R^4\ra\R^4$,
\beq
M:Z\mapsto(f_+(1)-f_-(1),a_+(1)+a_-(1)-n,N_+(1)-N_-(1),N_+'(1)+N_-'(1)).
\eeq
We use  a fourth order Runge-Kutta method to solve (\ref{ranhas3}), and hence evaluate the map $M$, for a given
$Z$, and a Newton-Raphson scheme to solve $M(Z)=0$ in the initial case $\kappa=0$. In this way, we can numerically construct undeformed solutions with $n_+$ vortices at the North pole, and $n_-$ at the South pole. 

We now allow $\kappa$ to vary, so that the shooting map becomes a function $\R^5\ra\R^4$,
$(\kappa,Z)\mapsto M(\kappa,Z)$. We seek to construct a smooth curve in $\R^5$, $(\kappa(s),Z(s))$ satisfying
$M(\kappa(s),Z(s))=0$ with $\kappa(0)=0$ and $Z(0)$ the shooting parameters of the undeformed vortex. To do this,
we use {\em pseudo-arclength continuation}. Having obtained one point on the curve $(\kappa_0,Z_0)$, we
construct a tangent vector $(\dot\kappa_0,\dot Z_0)$ to the curve at $(\kappa_0,Z_0)$ and then seek a nearby point $(\kappa,Z)$ on the curve satisfying
\begin{align*}
\dot{Z}_0\cdot \left(Z-Z_0\right) + \dot{\kappa}_0\left(\kappa-\kappa_0\right) = \delta s,
\end{align*}
where $\delta s>0$ is a small constant. This is an approximation of the arc-length condition
\begin{align*}
||\dot{Z}||^2 + |\dot{\kappa}|^2 = 1.
\end{align*}
We achieve this by seeking a zero of the augmented map $\wt{M}:\R^5\ra\R^5$,
\beq
\wt{M}(\kappa,Z)=(M(\kappa,Z),\dot{Z}_0\cdot \left(Z-Z_0\right) + \dot{\kappa}_0\left(\kappa-\kappa_0\right) - \delta s)
\eeq
again using a Newton-Raphson scheme (and a Runge-Kutta method to evaluate $M$).

For each equivariant divisor and sphere radius $R$, we used the known solution at $\kappa=0$ and the initial unit tangent vector in the positive $\kappa$ direction in order to begin the continuation process. For the subsequent points, we use the difference between the solution $(\kappa_n,Z_n)$ and the previous solution 
$(\kappa_{n-1},Z_{n-1})$ to approximate the tangent vector at $(\kappa_n,{Z_n})$. 
In every case we find that the solution curves exhibit a single global maximum value of $\kappa(s)$, that is,
$\kappa(s)$ initially increases monotonically, before reaching a turning point, after which it decreases 
monotonically towards $0$.  The numerics suggest that, after its turning point, $\kappa(s)$ approaches $0$ arbitrarily closely but never reaches $0$. This is consistent both with the uniqueness of undeformed vortices
\cite{bra} and with the existence, for all small $\kappa>0$ of (at least) {\em two} distinct vortex solutions for each divisor, as suggested by Ricciardi and Tarantello's analysis of the model on $T^2$ \cite{rictar}. We see then, that on compact $\Sigma$, these two distinct solutions probably lie on the same solution branch, along which
$\kappa$ has a turning point. The solution curves for $(n_+,n_-)=(2,0)$ and $(n_+,n_-)=(1,1)$ with $R=4$ are depicted in Figure \ref{fig:curve}. The behaviour of the solutions as $s$ approaches its supremum $s_1$ (and $\kappa$ approaches $0$) is interesting: the vortex appears to degenerate into a configuration with vanishing Higgs and electric fields and
spatially constant magnetic field. The value of the constant magnetic field's spatially constant limit is $\frac{n}{2R^2}$, since the magnetic flux is still quantised. Furthermore, the neutral scalar field becomes spatially constant and diverges to $-\infty$, in such a way that $\kappa N\ra \frac{n}{2R^2}-\frac12$, in this instance to $-7/16$, as is shown by Figure \ref{fig:kappaN}. Note that in the large $R$ limit, $(\phi,N)$ approaches the
alternative vacuum $\phi=0$, $N=-(2\kappa)^{-1}$, which diverges as $\kappa\ra 0$. This suggests a
possible link between the extra vortices found on compact domains and the nontopological
``vortices" found on the plane (which tend, as $r\ra\infty$ to this alternative vacuum)
\cite{leeleemin}. The limit where $\kappa(s)$ returns to $0$, while reminiscent of dissolving vortices in the Bradlow limit \cite{bapman}, is quite subtle, since the Bogomol'nyi equations do not support
solutions of this type.

\begin{figure}
\begin{subfigure}{0.5\textwidth}
  \centering
  \begin{overpic}[width=0.8\textwidth]{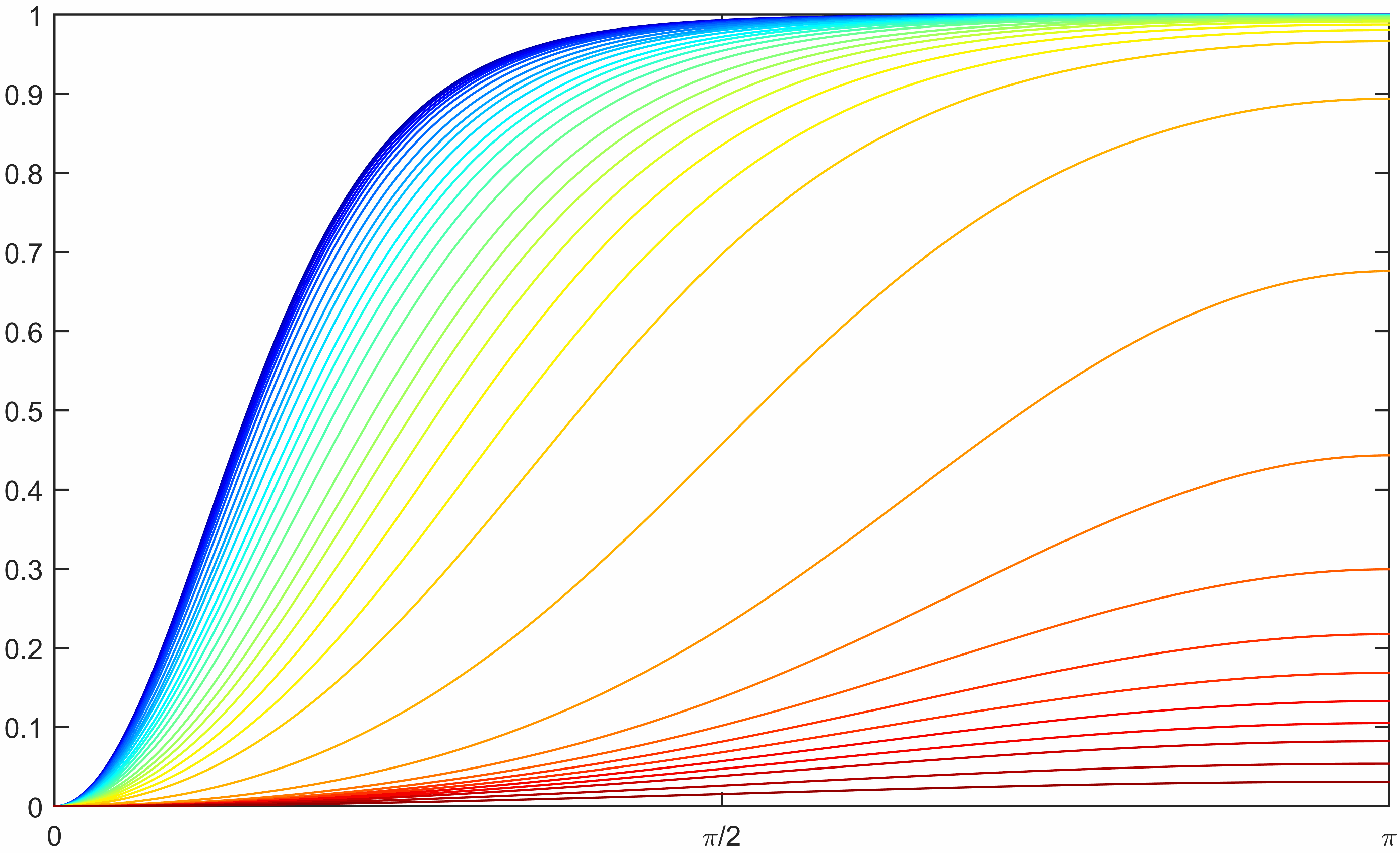}
     	\put(50,-8){$\Theta$}
       	\put(-8,32){\begin{turn}{90}$f$\end{turn}}
  \end{overpic}
        \caption{\phantom{Enough empty space to clear room}}
        \label{fig:1a}
  \par\medskip 
  \begin{overpic}[width=0.8\textwidth]{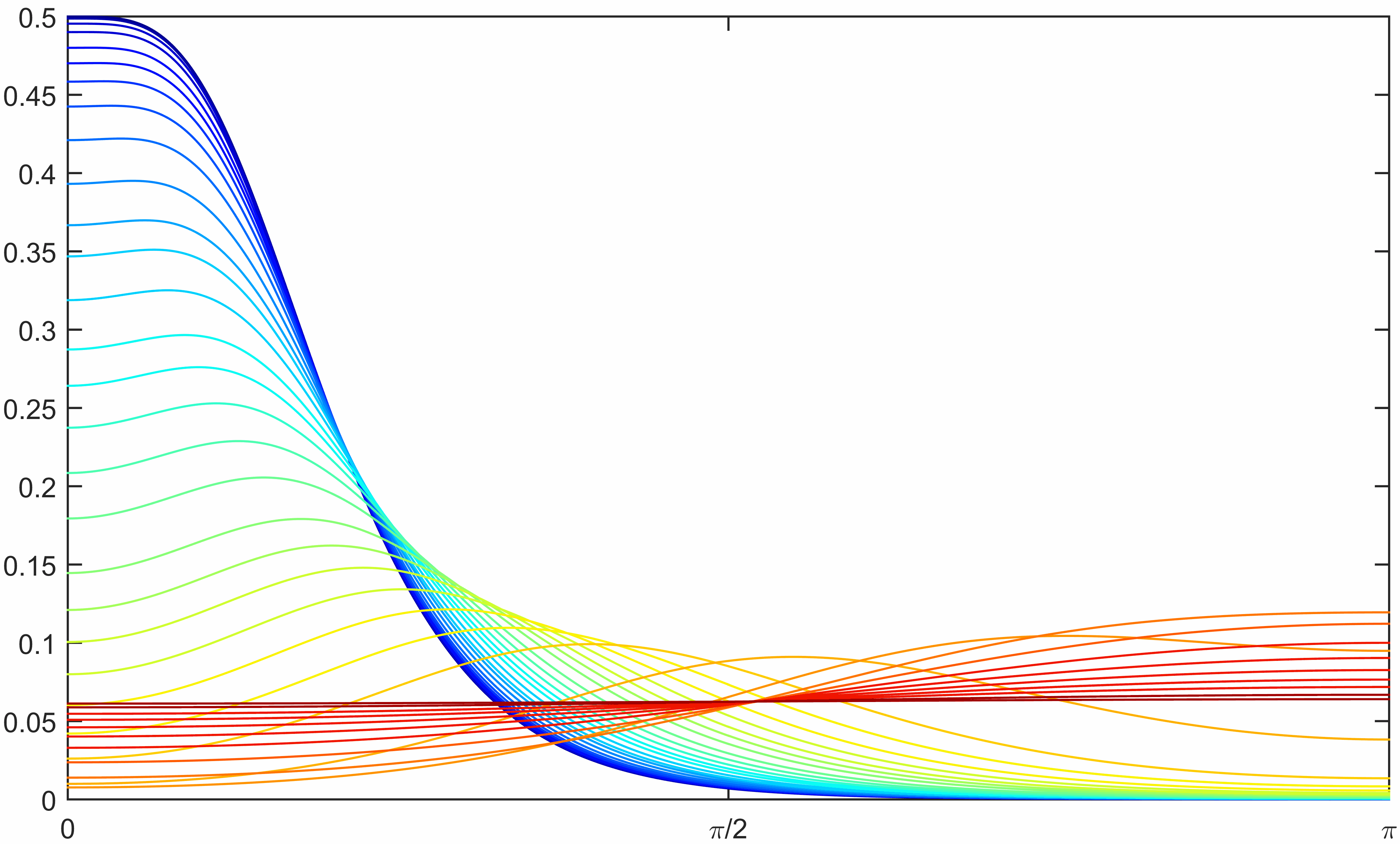}
     	\put(50,-8){$\Theta$}
       	\put(-8,28){\begin{turn}{90}$\star B$\end{turn}}
  \end{overpic}
        \caption{\phantom{Enough empty space to clear room}}
        \label{fig:1b}
  \par\medskip 
  \begin{overpic}[width=0.8\textwidth]{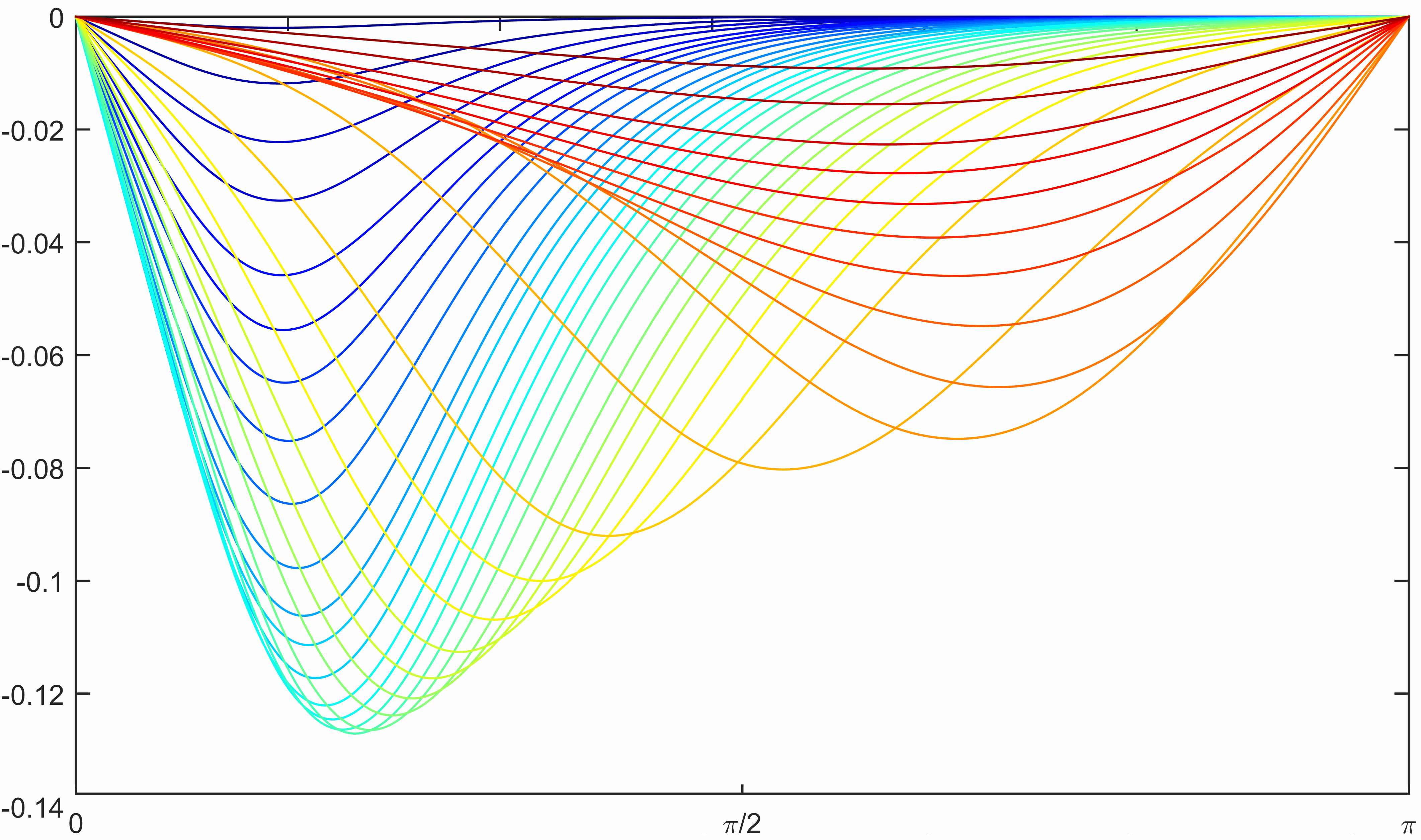}
     	\put(50,-8){$\Theta$}
       	\put(-8,18){\begin{turn}{90}$\left\langle e, R\d\Theta\right\rangle$\end{turn}}
  \end{overpic}
        \caption{\phantom{Enough empty space to clear room}}
        \label{fig:1c}
  \par\medskip 
  \begin{overpic}[width=0.8\textwidth]{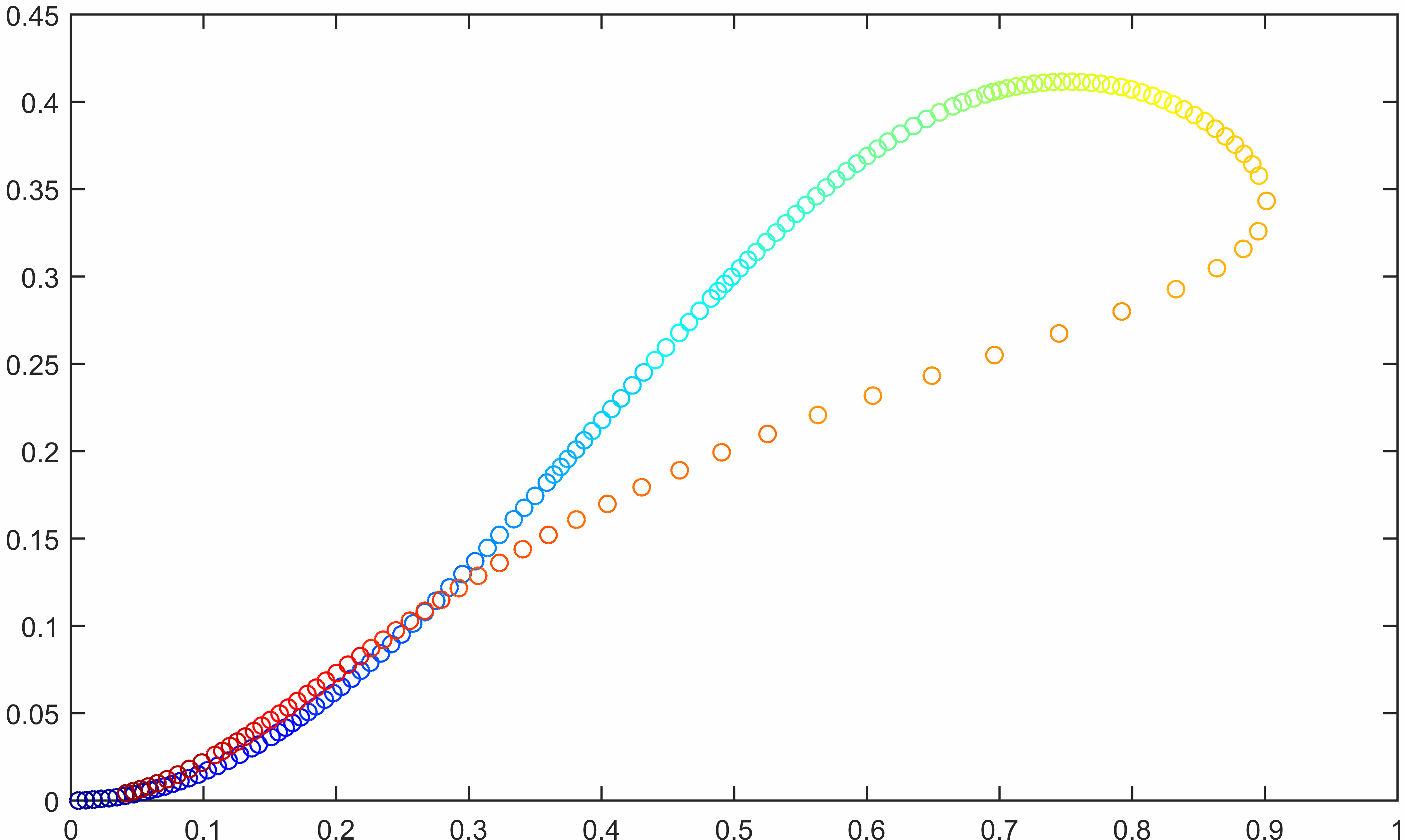}
     	\put(50,-8){$\kappa$}
       	\put(-8,24){\begin{turn}{90}$E_{elec}$\end{turn}}
  \end{overpic}
        \caption{\phantom{Enough empty space to clear room}}
        \label{fig:1d}
\end{subfigure}
\hspace*{\fill}
\begin{subfigure}{0.5\textwidth}
\centering
  \begin{overpic}[width=0.8\textwidth]{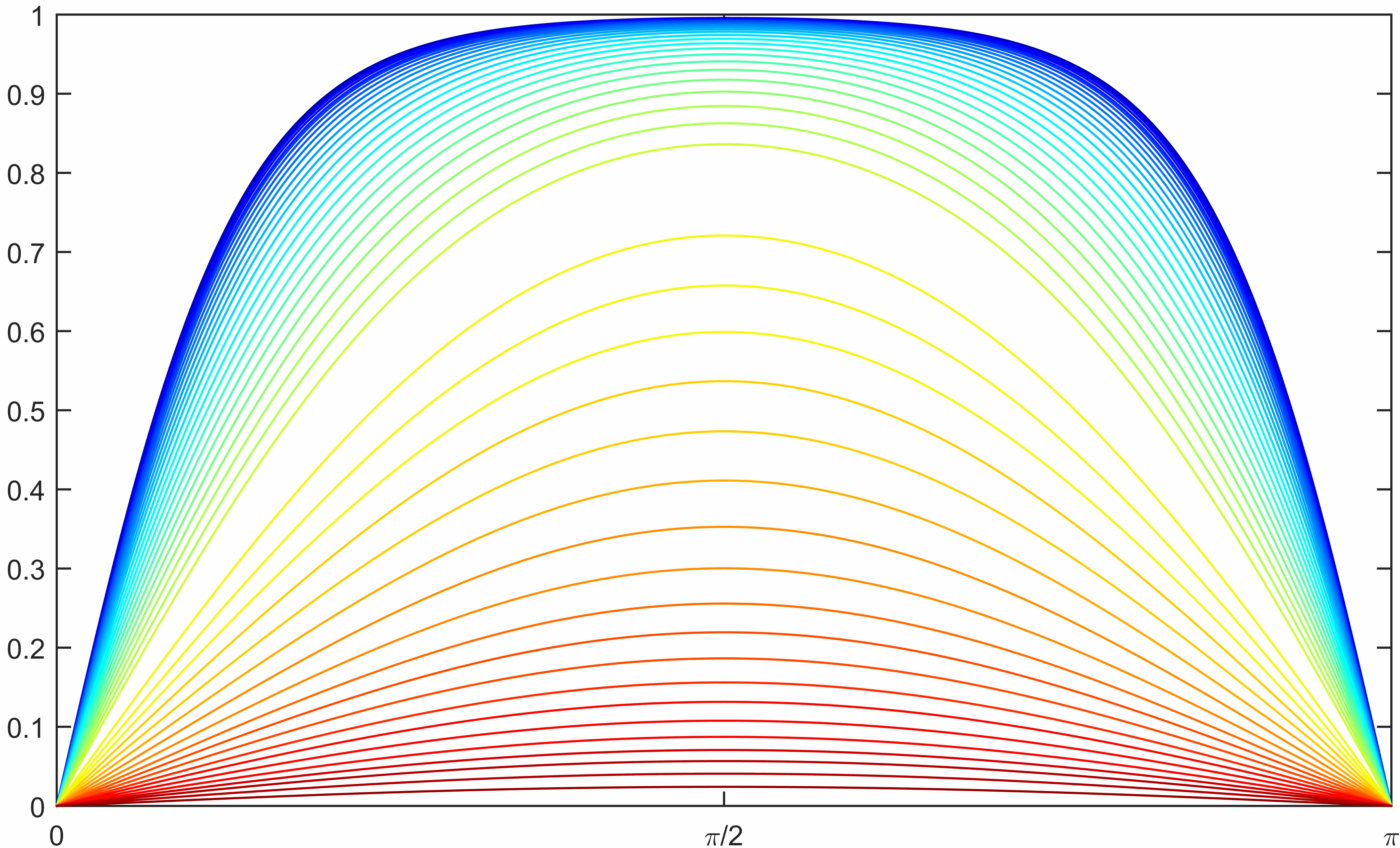}
     	\put(50,-8){$\Theta$}
       	\put(-8,32){\begin{turn}{90}$f$\end{turn}}
  \end{overpic}
        \caption{\phantom{Enough empty space to clear room}}
        \label{fig:1e}
  \par\medskip 
  \begin{overpic}[width=0.8\textwidth]{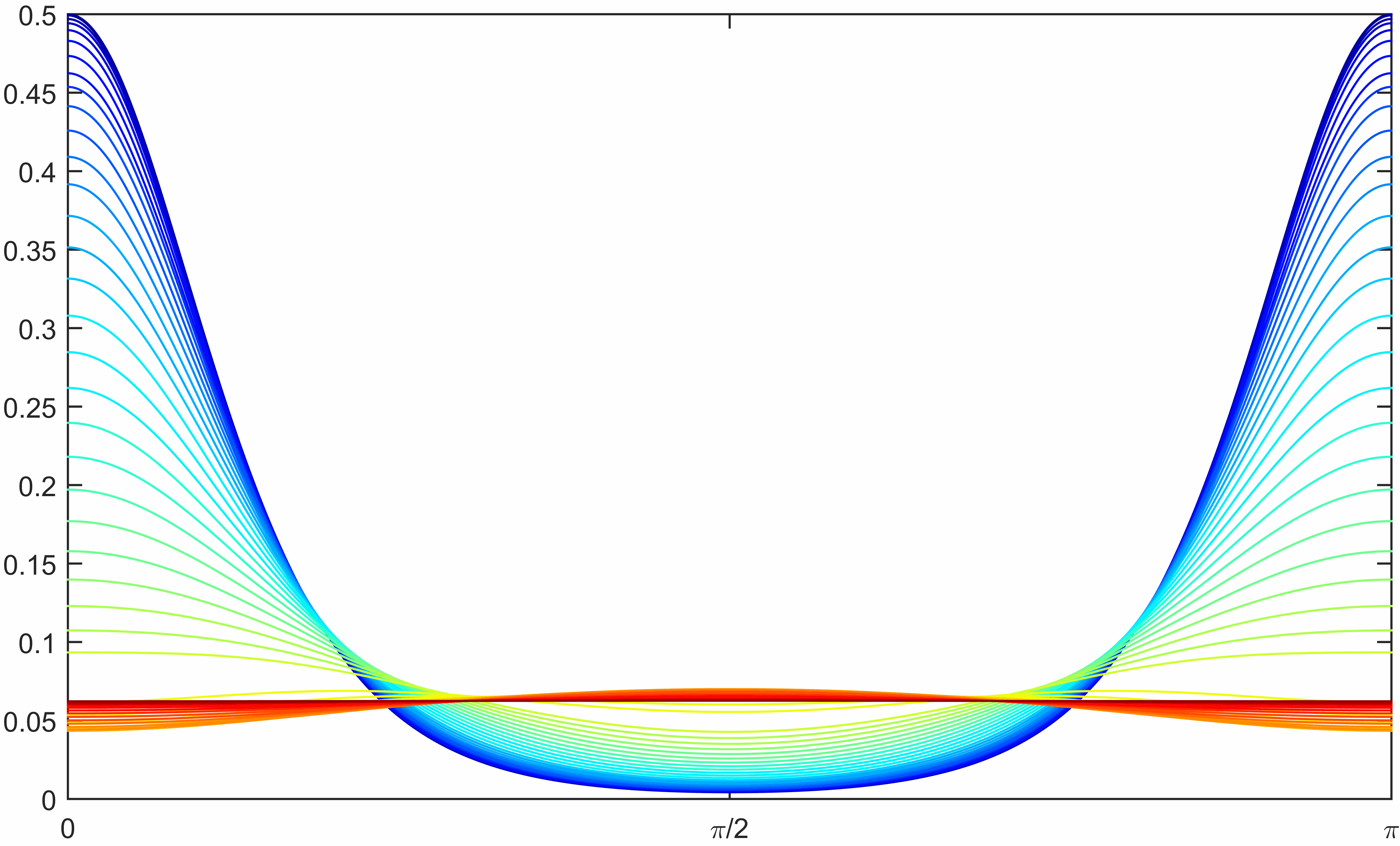}
     	\put(50,-8){$\Theta$}
       	\put(-8,28){\begin{turn}{90}$\star B$\end{turn}}
  \end{overpic}
        \caption{\phantom{Enough empty space to clear room}}
        \label{fig:1f}
  \par\medskip 
  \begin{overpic}[width=0.8\textwidth]{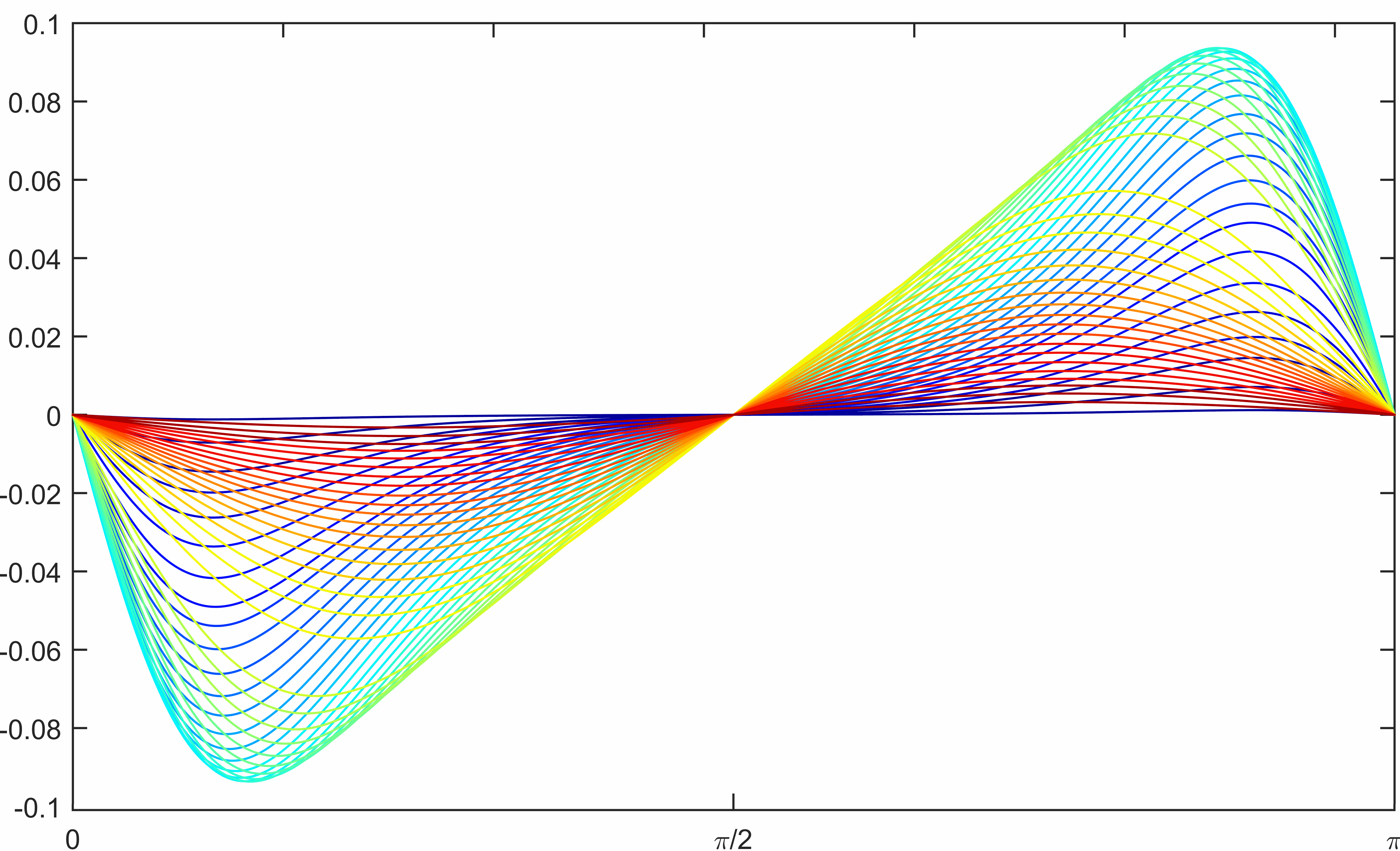}
     	\put(50,-8){$\Theta$}
       	\put(-8,18){\begin{turn}{90}$\left\langle e, R\d\Theta\right\rangle$\end{turn}}
  \end{overpic}
        \caption{\phantom{Enough empty space to clear room}}
        \label{fig:1g}
  \par\medskip 
  \begin{overpic}[width=0.8\textwidth]{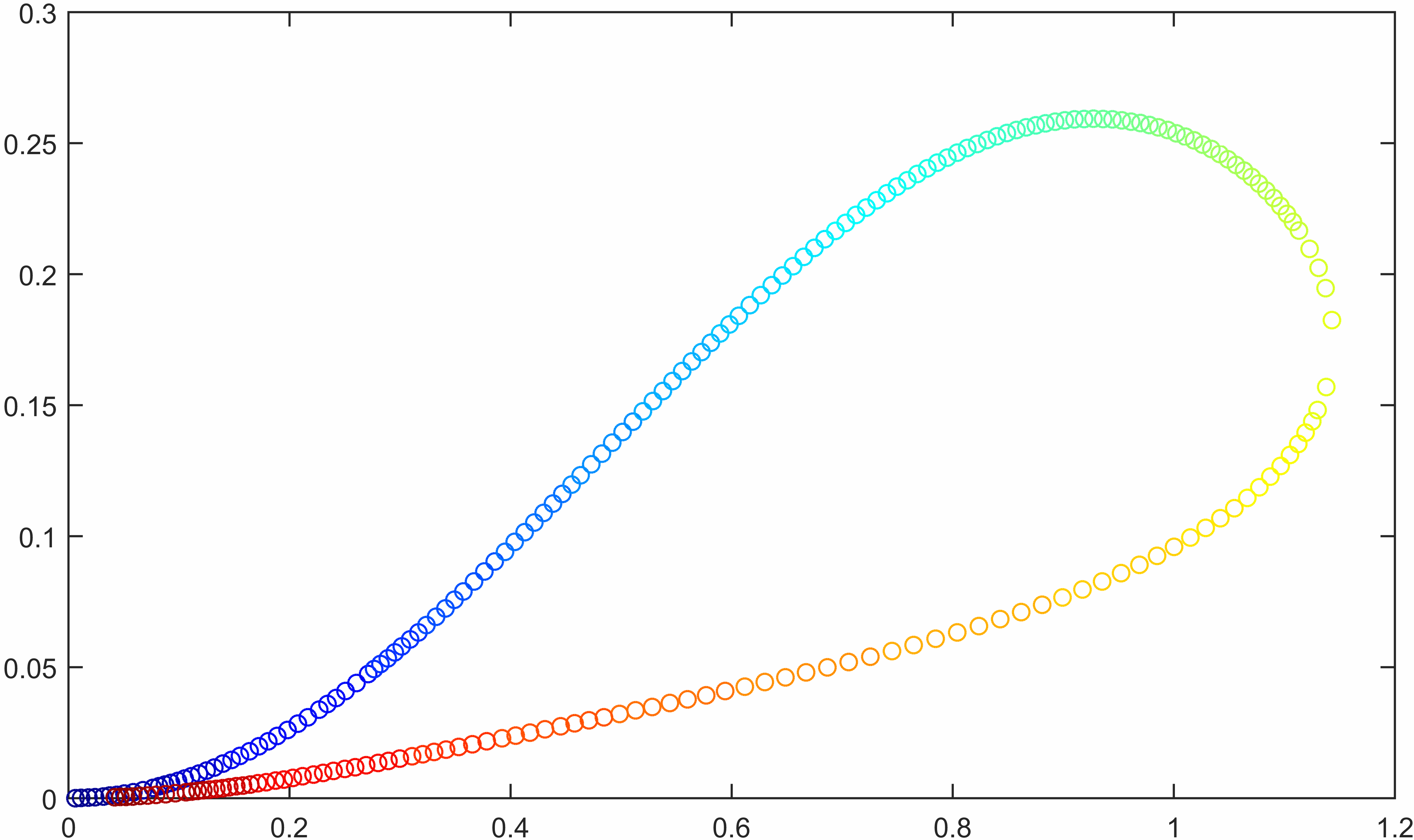}
     	\put(50,-8){$\kappa$}
       	\put(-8,24){\begin{turn}{90}$E_{elec}$\end{turn}}
  \end{overpic}
        \caption{\phantom{Enough empty space to clear room}}
        \label{fig:1h}
\end{subfigure}
        \caption{\small Snapshots along the curve of vortex solutions with $n_+=2$ and $n_-=0$ ((a)-(d)) and $n_+=1$ and $n_-=1$ ((e)-(h)) for a sphere of radius $R=4$ starting at the undeformed $\kappa=0$ solution.  (a), (e): Higgs profile function $f$, (b), (f): magnetic and (c), (g): electric fields each plotted against the sphere's angle of declination $\Theta$. The colours in (a)-(c) (resp. (e)-(g)) correspond to position along solution curves (d), (resp. (h)): the total electric energy versus $\kappa$. Note that the solution curves are not closed (as can be seen, for example, from the magnetic field plots (b), (f)).} 
        \label{fig:curve}
\end{figure}

\begin{figure}
\begin{subfigure}{0.5\textwidth}
  \centering
  \begin{overpic}[width=0.9\textwidth]{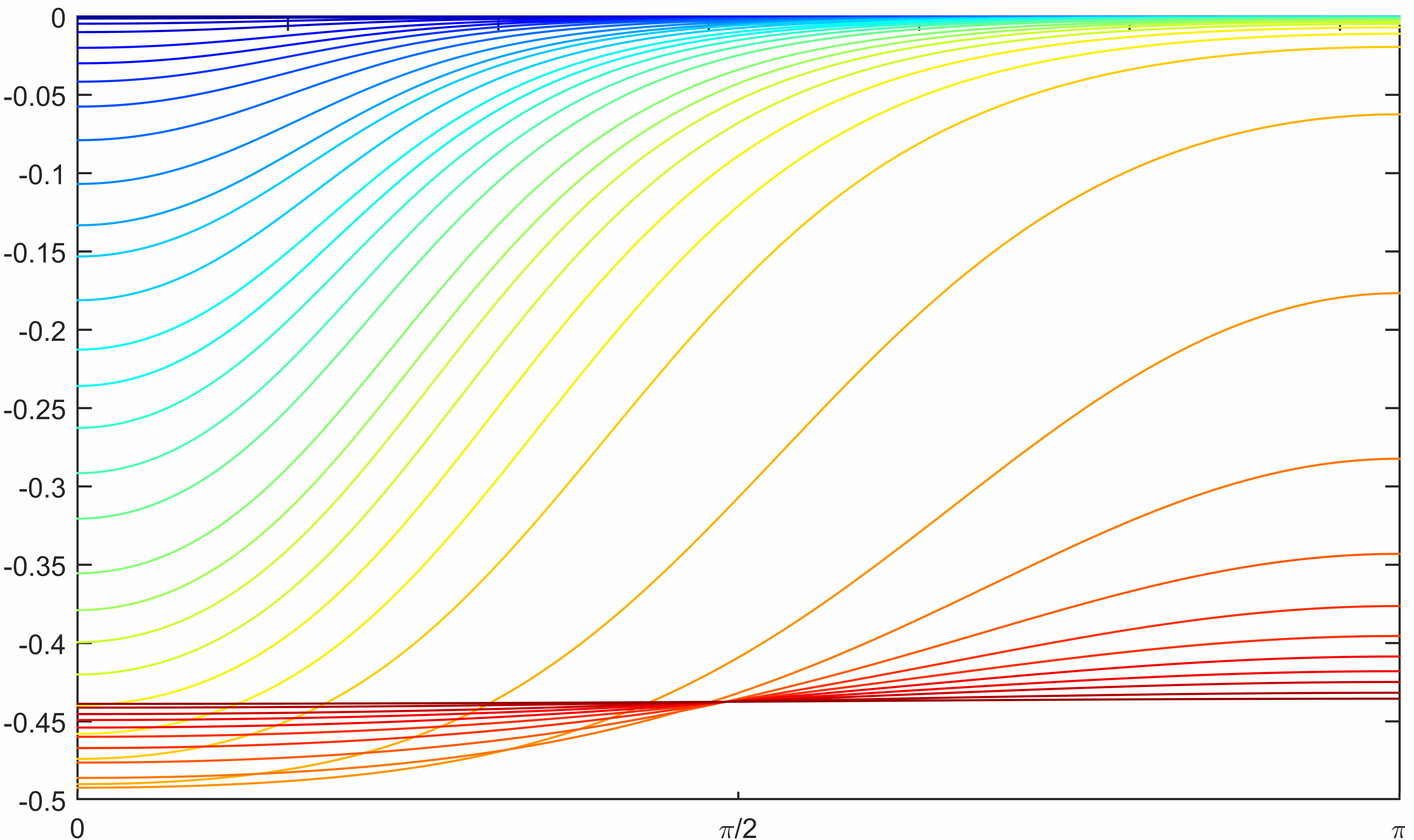}
   	\put(50,-5){$\Theta$}
   	\put(-6,27){\begin{turn}{90}$\kappa N$\end{turn}}
  \end{overpic}
  \caption{\phantom{Enough empty space to clear room}}
\end{subfigure}
\hfill
\begin{subfigure}{0.5\textwidth}
\centering
\begin{overpic}[width=0.9\textwidth]{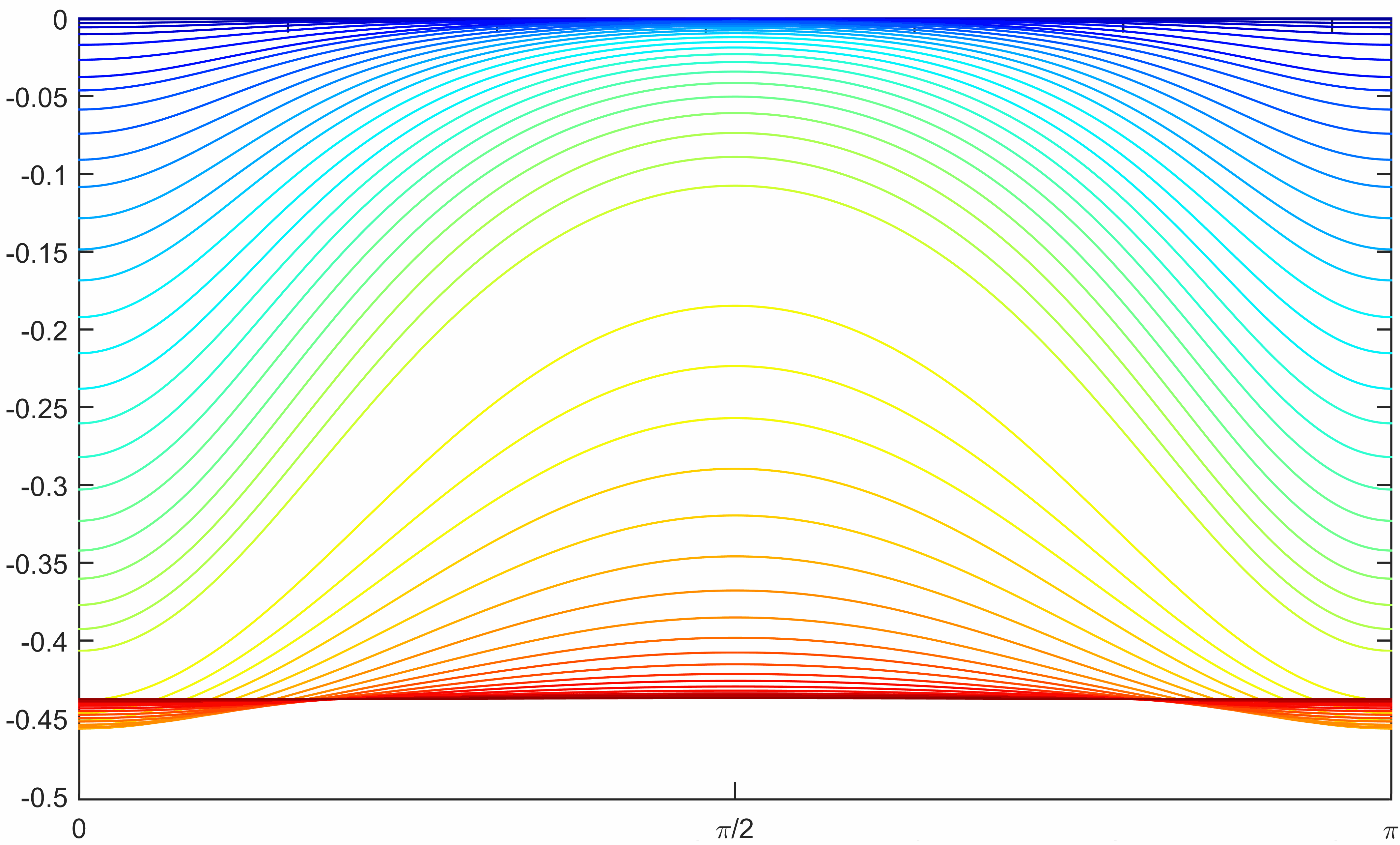}
  \put(50,-5){$\Theta$}
  \put(-6,27){\begin{turn}{90}$\kappa N$\end{turn}}
\end{overpic}
\caption{\phantom{Enough empty space to clear room}}
\end{subfigure}
\caption{\small Snapshots of $\kappa N$ against $\Theta$ for (a) $n_+=2$ and $n_-=0$ and (b) $n_+=1$ and $n_-=1$ with colouring to indicate the position along the vortex solution curve as in Figure \ref{fig:curve}(d) for (a) and Figure \ref{fig:curve}(h) for (b). As the curves (see Figure \ref{fig:curve}(d), (h)) reach $\kappa_*(D)$ and turn back to zero, $\kappa N \rightarrow \frac{n}{2R^2}-\frac12$, which for this configuration is $-\frac{7}{16}\approx 0.44$.}
\label{fig:kappaN}
\end{figure}

Since $\kappa(s)$ has a single turning point, at $s_*$ say, it attains a maximum value $\kappa_*(D)=\kappa(s_*)$.
Figure \ref{fig:kappamax} shows the maximal coupling $\kappa_*(D)$ for a range of $D$ and sphere radii $R$.
Note that $\kappa_*$ depends on both $n_+$ and $n_-$, not just their sum $n=n_++n_-$. In general, for fixed $n$,
the more evenly the divisor is split between the two poles (i.e.\ the smaller is $|n_+-n_-|$), the larger is
$\kappa_*(D)$. In fact, $\kappa_*(D)$ does not always depend monotonically on $n$: for instance for $R$ greater than $\approx$ 3.4, the maximal $\kappa$ for the symmetric $n = 4$ ($n_+ = n_- = 2$) configuration is larger than the maximal $\kappa$ for the asymmetric $n = 3$ ($n_+ = 3, n_- = 0$) configuration. Note also that in all cases $\kappa_*(D)$ is considerably smaller than the upper bound obtained in Theorem \ref{thm2}. The conjectured upper bound suggested by \cite{rictar}, equation
(\ref{rictarbounds}), is better, though still far from optimal, particularly at large $R$.

\begin{figure}
\centerline{\includegraphics[width=18cm]{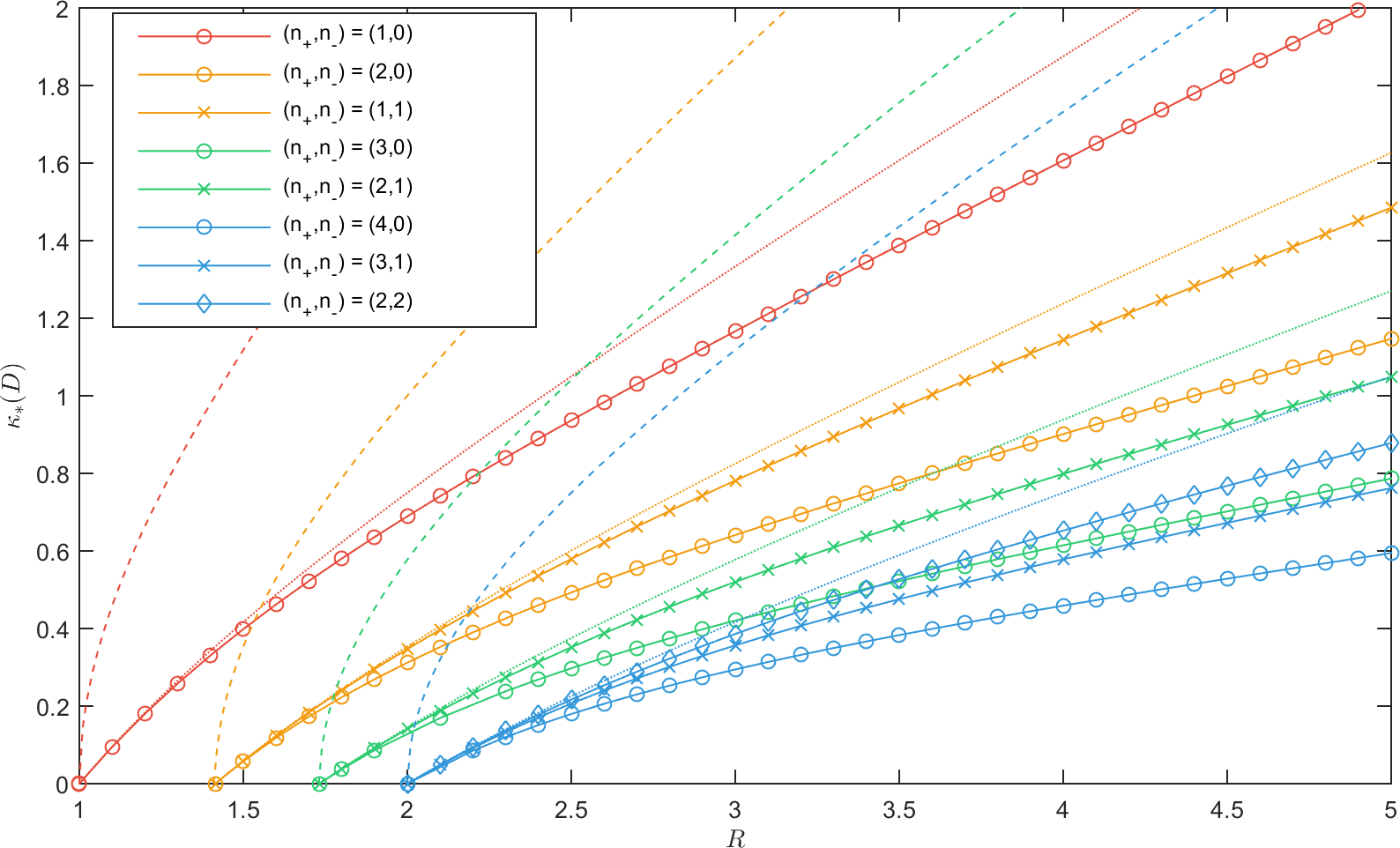}}
\caption{\small A plot of $\kappa_*(D)$, against sphere radius $R$ for various choices of divisor $D$. Data plotted in red, orange, green and blue corresponds to $n=n_++n_-=1,2,3,4$ respectively. The dashed lines represent our upper bound on $\kappa_*(D)$ (Theorem \ref{thm2}) and the dotted lines the bounds of Ricciardi and Tarantello \cite{rictar} (see \eqref{rictarbounds}). }
\label{fig:kappamax}
\end{figure}

\section{Higher dimensions}
\news

A key strength of the continuation/IFT strategy used here to prove existence of Maxwell-Chern-Simons vortices is its adaptability. In this section we illustrate this point by formulating, and proving vortex existence in, a higher dimensional generalization of the MCSH model. 

For the rest of this section, $\Sigma$ will be a compact K\"ahler manifold of complex dimension $k$, with K\"ahler form $\omega$. For any $p$-form $\xi$ we use $\xi^{[m]}$ to denote $\xi^m/m!$. 
Note that the volume form on $\Sigma$ is $\omega^{[k]}$, that
$*\omega^{[m]}=\omega^{[k-m]}$ and $|\omega^{[m]}|^2=k!/(m!(k-m)!)$ pointwise. 
We denote by $\Lambda:\Omega^p(\Sigma)\ra\Omega^{p-2}(\Sigma)$ the $L^2$ adjoint of the Lefschetz map
$L:\xi\mapsto\omega\wedge\xi$ (explicitly, $\Lambda=(-1)^p*L*$)
\cite[p.\ 139]{voi}. Other than this, 
we adopt the same notation as in section \ref{sec:MCSH}. It is convenient to define $\wt\L=
\pr^*\L$, where $\pr:I\times\Sigma\ra\Sigma$, $\pr(t,x)=x$, and $\wt{A}=\pr^*A(t)-iA_0(t)dt$ for the connexion on $\wt\L$ defined by the pair $(A(t),A_0(t))$. The curvature of this connexion is
\beq
\wt{F}=B(t)+dt\wedge e(t)
\eeq
where $B(t)=(\d A)(t)\in\Omega^2(\Sigma)$ and $e(t)=(\dot{A}-\d A_0)(t)\in\Omega^1(\Sigma)$ are time varying forms on $\Sigma$ interpreted as the magnetic and electric fields respectively. 

The field theory of interest has action
\beq
S=S_{\mbox{\tiny YMH}}+\kappa S_{\mbox{\tiny $\Psi$CS}}
\eeq
where $S_{\mbox{\tiny YMH}}$ is defined in (\ref{SYMH}) and 
\beq
S_{\mbox{\tiny $\Psi$CS}}=\frac12\int_{I\times\Sigma}\wt{A}\wedge\wt{F}\wedge\omega^{[k-1]},
\eeq
which we call the {\em pseudo-Chern-Simons} functional, since it coincides with $S_{\mbox{\tiny CS}}=\frac12\int
\wt{A}\wedge \wt{F}^k$ only in the case 
that $\Sigma$ has complex dimension $k=1$. As we will see, $S_{\mbox{\tiny $\Psi$CS}}$ defines a much more satisfactory electrodynamics for $k\geq 2$ than $S_{\mbox{\tiny CS}}$. 

The Euler-Lagrange equations for $S$ are (\ref{EL1}), (\ref{EL2}) along with
\bea
\dot{e}+\dstar B-\kappa*(e\wedge\omega^{[k-1]})&=&j\label{EL3-k}\\
-\delta e +\kappa \Lambda B&=&\rho\label{gauss-k}.
\eea
Note that these reduce to (\ref{EL3}) and (\ref{gauss}) when $k=1$, and maintain the desirable feature of being linear in electromagnetic fields.
This feature is broken, for $k\geq 2$, if we use $S_{CS}$ rather than $S_{\Psi CS}$. More importantly,
solutions of this system conserve the total energy $E$, defined in (\ref{Edef}), and there is a topological lower bound on $E$, saturated by solutions of a coupled system of PDEs generalizing the
Bogomol'nyi equations (\ref{bog1}), (\ref{bog2}), (\ref{bog3}), as we next demonstrate. 

The argument follows closely the Bogomol'nyi bound for the undeformed (that is, $\kappa=0$) model
in general dimension,
as presented, for example, in \cite{bra}, and the $k=1$
argument presented in section \ref{sec:MCSH}, so we will be brief. It relies on the so called K\"ahler identities, valid for any
metric connexion on the hermitian line bundle $\L$ over $\Sigma$:
\bea
\ip{\Lambda B,|\phi|^2}&=&\|\cd_A\phi\|^2-\|\ol{\cd}_A\phi\|^2,\label{KI1}\\
|B|^2\omega^{[k]}&=&|\Lambda B|^2\omega^{[k]}-B\wedge B\wedge\omega^{[k-2]}+4|B^{0,2}|^2\omega^{[k]}
\label{KI2}.
\eea
So, for any (possibly time dependent) collection of fields $(\phi,N,A,A_0)$,
\bea
E&=&\frac12\left\|\Lambda B-\frac12(1+2\kappa N-|\phi|^2)\right\|^2+2\|B^{0,2}\|^2+\|\ol{\cd}_A\phi\|^2
+\frac12\|D_0\phi+iN\phi\|^2+\frac12\|e+\d N\|^2\nonumber \\
&&+\frac12\|\dot{N}\|^2+
\ip{N,-\dstar e-\rho+\kappa\Lambda B}
+\frac12\int_\Sigma B\wedge\omega^{[k-1]}-\frac12\int_\Sigma B\wedge B\wedge \omega^{[k-2]},
\eea
and it follows that, for all fields satisfying Gauss's law (\ref{gauss-k}),
\beq\label{bogbound-k}
E\geq E_0(\L,[\omega]):=\frac12\int_\Sigma B\wedge\omega^{[k-1]}-\frac12\int_\Sigma B\wedge B\wedge \omega^{[k-2]},
\eeq
with equality if, and only if, in the gauge $A_0=N$, all fields are static and
\bea
\ol{\cd}_A\phi&=&0,\label{Bk1}\\
B^{0,2}&=&0,\label{Bk2}\\
\Lambda B&=&\frac12(1+2\kappa N-|\phi|^2),\label{Bk3}\\
\Delta N+\kappa\Lambda B&=&-N|\phi|^2.\label{Bk4}
\eea
Note that the quantity $E_0(\L,[\omega])$ on the right hand side of (\ref{bogbound-k}) depends only on the cohomology class of $B/2\pi$, which is a topological invariant of the bundle $\L$ (its first Chern class $c_1(\L)$), 
and the cohomology class of
$\omega$.  Hence, any static triple $(\phi,A,N)$ satisfying (\ref{Bk1})-(\ref{Bk4}) minimizes $E$ in its homotopy class, among fields satisfying (\ref{gauss-k}). It is straightforward to verify that
such fields also satisfy the Euler-Lagrange equations. In the case $k=1$, (\ref{Bk2}) is vacuously true on dimensional grounds, and the system reduces to (\ref{bog1})-(\ref{bog3}). 

In the case
$\kappa=0$, (\ref{Bk4}) is solved by $N=0$, and (\ref{Bk1})-(\ref{Bk3}) reduce to the usual vortex equations on a K\"ahler manifold \cite{bra}. A necessary condition for existence of solutions of this system (obtained
by integrating (\ref{Bk3}) over $\Sigma$) is that 
\beq
\frac1{4\pi}\Vol(\Sigma)\geq C_1(\L,[\omega]):=\frac{1}{2\pi}\int_\Sigma B\wedge\omega^{[k-1]}.
\eeq
If $\Vol(\Sigma)>4\pi C_1(\L,[\omega])$, solutions exist and are (up to gauge) in one-to-one
correspondence with effective divisors $D$ representing the homology class Poincar\'e dual to
$c_1(\L)$ \cite{bra}. Just as for vortices on a Riemann surface, $D=\phi^{-1}(0)$, but now $D$ is
a collection of irreducible analytic hypersurfaces in $\Sigma$ \cite[pp.\ 128-139]{grihar}. Our aim is to prove an existence, local uniqueness and smoothness result for pseudo-Chern-Simons deformations of these vortices.

\begin{thm} Let 
$D$ be an effective divisor homologous to the Poincar\'e dual of $c_1(\L)$, and assume $\Vol(\Sigma)>4\pi C_1(\L,[\omega])$. Then there exist $\kappa_*>0$ and a smooth curve
$$
(\phi,A,N):(-\kappa_*,\kappa_*)\ra \Gamma(\L)\times\Conn(\L)\times\Omega^0,
$$
 unique up to gauge, such that, for all $\kappa\in(-\kappa_*,\kappa_*)$,
$(\phi(\kappa),A(\kappa),N(\kappa))$ satisfies the Bogomol'nyi equations (\ref{Bk1})--(\ref{Bk4}) and $\phi(\kappa)^{-1}(0)=D$.
\end{thm}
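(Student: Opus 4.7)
The plan is to mirror the argument for Theorem \ref{ranhas}: I would reduce the higher dimensional Bogomol'nyi system (\ref{Bk1})--(\ref{Bk4}) to the very same coupled semilinear elliptic system (\ref{ariel1})--(\ref{ariel2}) already studied, and then invoke Lemma \ref{natgra} directly.

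First, by Bradlow \cite{bra}, the volume hypothesis guarantees that at $\kappa=0$ there is (up to gauge) a unique solution $(\hat\phi,\hat A,N=0)$ with $\hat\phi^{-1}(0)=D$; I would fix a holomorphic structure on $\L$ for which $\hat\phi$ is holomorphic and $\hat A$ is the corresponding Chern connection, so in particular $\Lambda\hat B=\frac12(1-|\hat\phi|^2)$ and $\hat B^{0,2}=0$. Any triple $(\phi,A,N)$ with $\phi^{-1}(0)=D$ satisfying (\ref{Bk1}) and (\ref{Bk2}) has a unique representative in its gauge orbit of the form $\phi=e^u\hat\phi$, $u:\Sigma\to\R$ smooth, with $A$ equal to the Chern connection of the rescaled Hermitian metric $e^{-2u}\hat h$. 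In this gauge both (\ref{Bk1}) and (\ref{Bk2}) hold automatically, the latter because Chern connections on a line bundle have curvature of pure type $(1,1)$.

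Second, I would use the standard formula $B-\hat B=2\cd\ol\cd u$ for the change of curvature together with the K\"ahler identity $\Lambda(\cd\ol\cd u)=-\tfrac12\Delta u$ (with $\Delta$ the Hodge Laplacian on real functions) to obtain $\Lambda B=\tfrac12(1-f)-\Delta u$, where $f:=|\hat\phi|^2$. Substituting this expression into (\ref{Bk3}) and (\ref{Bk4}) and simplifying (exactly as in the derivation of (\ref{ariel1}), (\ref{ariel2}) from (\ref{bog2}), (\ref{bog3}) in the proof of Theorem \ref{ranhas}) yields precisely the coupled system (\ref{ariel1}), (\ref{ariel2}) on the pair $(u,N)$.

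Third, I would apply Lemma \ref{natgra} with $M=\Sigma$. The real dimension is $m=2k\geq 2$, and $f=|\hat\phi|^2$ is smooth, non-negative, vanishing exactly on the divisor $D$. Since $D$ is an analytic hypersurface of real codimension $2$ in $\Sigma$, it has Lebesgue measure zero, so the hypotheses of Lemma \ref{natgra} are met. The lemma produces $\kappa_*>0$ and a locally unique smooth curve $(u(\kappa),N(\kappa))$ through $(0,0)$ solving (\ref{ariel1}), (\ref{ariel2}) for all $|\kappa|<\kappa_*$; translating back via $\phi(\kappa)=e^{u(\kappa)}\hat\phi$ and the associated Chern connection recovers the required smooth curve of solutions of (\ref{Bk1})--(\ref{Bk4}). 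The one step demanding care is the reduction in the second paragraph: the sign and normalization conventions in the K\"ahler identity and the curvature shift must be tracked to ensure that (\ref{ariel1}), (\ref{ariel2}) are reproduced verbatim, so that Lemma \ref{natgra} may be invoked without any modification. Existence, local uniqueness via the Implicit Function Theorem, and spatial smoothness via Sobolev bootstrap are then inherited from that lemma at no additional cost.
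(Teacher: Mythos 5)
Your proposal takes essentially the same route as the paper: gauge-fix to $\phi=e^u\hat\phi$ (equivalently, rescale the Hermitian metric and take the Chern connection, which makes (\ref{Bk1}), (\ref{Bk2}) hold automatically), compute $\Lambda B$ via the K\"ahler identities to reduce (\ref{Bk3}), (\ref{Bk4}) to (\ref{ariel1}), (\ref{ariel2}) with $f=|\hat\phi|^2$ vanishing on the measure-zero set $D$, and invoke Lemma \ref{natgra}. The one flaw is in the intermediate identities you state: with the paper's conventions ($B=\d A$ real, $\Delta$ the Hodge Laplacian) the curvature shift is $B=\hat B-2i\,\cd\ol\cd u$ and, since $[\Lambda,\cd]=i\ol\cd^\dagger$ and $\Delta=2\Delta_{\ol\cd}$, one has $\Lambda(\cd\ol\cd u)=\tfrac{i}{2}\Delta u$, so that $\Lambda B=\tfrac12(1-f)+\Delta u$; your formulas $B-\hat B=2\cd\ol\cd u$ and $\Lambda(\cd\ol\cd u)=-\tfrac12\Delta u$ instead give $\Lambda B=\tfrac12(1-f)-\Delta u$, which upon substitution yields $\Delta u-\tfrac{f}{2}(e^{2u}-1)+\kappa N=0$ rather than (\ref{ariel1}), and Lemma \ref{natgra} could not then be applied verbatim. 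Since you explicitly flag this as the step requiring care, it is a sign/normalization slip rather than a structural gap; with the corrected identities your argument coincides with the paper's proof.
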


\begin{proof}
By the results of \cite{bra}, there exists a unique (up to gauge) pair $(\hat\phi,\hat{A})$ 
satisfying (\ref{Bk1}), (\ref{Bk2}) and (\ref{Bk3}) with $\kappa=0$, such that $\phi^{-1}(0)=D$.
Now any other smooth section vanishing on $D$ has a unique representative in its gauge orbit of the
form $\phi=e^u\hat\phi$. Let $A=\hat{A}+i(\cd u-\ol\cd u)$. Then $\ol{\cd}_A\phi=e^u\ol{\cd}_{\hat{A}}\hat\phi=0$ by construction, and 
\beq
B=\hat{B}-2i\cd\ol\cd u,
\eeq
so $B^{0,2}=\hat{B}^{0,2}=0$. Hence
$(\phi,A)$ automatically satisfies (\ref{Bk1}) and (\ref{Bk2}). Furthermore, $[\Lambda,\cd]=i\ol{\cd}^\dagger$, where $\dagger$ denotes $L^2$ adjoint \cite[p.\ 139]{voi}, so
\bea
\Lambda B&=&\Lambda\hat{B}-2i([\Lambda,\cd]\ol{\cd}u+\cd\Lambda\ol{\cd}u)
=\Lambda\hat{B}+2\ol\cd^\dagger \ol\cd u+0
=\Lambda\hat{B}+2(\ol\cd^\dagger\ol\cd u+\ol\cd\, \ol\cd^\dagger u)\nonumber \\
&=&\frac12(1-|\hat\phi|^2)+\Delta u
\eea
since $\Lambda\ol\cd u=0=\cd^\dagger u$ on dimension grounds, and $\Delta=2\Delta_{\ol{\cd}}$
\cite[p.\ 141]{voi}. Hence, $(\phi,A,N)$ satisfies the remaining Bogomol'nyi equations (\ref{Bk3}), (\ref{Bk4}) if and only if $u,N$ satisfy (\ref{ariel1}), (\ref{ariel2}), with $f=|\hat{\phi}|^2$. The claim now follows immediately from Lemma \ref{natgra}.
\end{proof}

\begin{remark} Rehearsing the proof of Theorem \ref{thm2} with equations (\ref{Bk3}), (\ref{Bk4})
playing the roles of (\ref{bog2}), (\ref{bog3}), one easily deduces an upper bound on $\kappa_*(D)$
independent of $D$. Namely, if (\ref{Bk3}), (\ref{Bk4}) admit a smooth solution, then
\beq
\kappa^2\leq\frac{\Vol(\Sigma)}{4\pi C_1(\L,[\omega])}-1.
\eeq
Establishing a nontrivial lower bound on $\kappa_*(D)$ is more challenging.
\end{remark}

\section{Concluding remarks}\news

In this paper we used a simple Inverse Function Theorem argument to prove existence and local uniqueness of vortex solutions of the MCSH model in the small $|\kappa|$ regime on an arbitrary compact Riemann surface. This method has several advantages over more direct analytic approaches: it works uniformly on all geometries, immediately gives very strong smoothness information (both spatially and in terms of $\kappa$ dependence), and readily adapts to a natural high dimensional generalization of the model. We found a simple upper bound on $\kappa_*$, the maximal $|\kappa|$ for which vortices exist. Interpreting the solution curve $(\phi,A,N)(\kappa)$ as a solution of an (infinite dimensional) ODE problem, we proved existence of a positive {\em lower} bound on 
$\kappa_*(D)$, independent of the set of vortex positions $D$, the first result of this type that we are aware of.
It follows that the entire moduli space $M_n$ of $n$-vortex solutions persists for sufficiently small $|\kappa|$. We conducted a numerical study of rotationally equivariant vortices on round two-spheres, demonstrating that $\kappa_*(D)$ does, in fact, depend on the vortex positions, not just their number, and that the two distinct vortex solutions (suggested by
previous analysis of the model on flat tori \cite{rictar}) actually merge at $\kappa=\kappa_*(D)$. It would be interesting to see whether this is a generic phenomenon on compact domains.

Two other interesting questions suggest themselves. Can one develop a moduli space approximation to the low energy dynamics of vortices in this system? There are at least two competing conjectures for such a dynamics
\cite{colton,kimlee}, structurally similar (geodesic motion on $M_n$ perturbed by an effective magnetic field) but
known to differ from one another \cite{alqspe}. Neither has been rigorously derived from the parent field theory. Having proved persistence of the whole moduli space, this question is now well founded in the analytically simpler setting of compact domains where one can hope to make rigorous progress more easily.  

Second, can the IFT method pursued here be adapted to deal with Chern-Simons deformation of the
$O(3)$ sigma model, in which the Higgs field $\phi$ takes values in $S^2$? The undeformed model is known to support BPS vortex-antivortex superpositions 
\cite[ch.\ 11]{yan}. By shifting the vacuum manifold from the equator of the target $S^2$, one obtains a model where vortices and antivortices differ in size and mass, yet still coexist in marginally stable equilibrium. These have been proved to persist in the deformed model on $\R^2$ for small $|\kappa|$ in general \cite{hannam}, for all $\kappa$ in the case of pure vortex (or pure antivortex) solutions \cite{hanson}, and to be unique
in the case of pure coincident vortices (or antivortices) \cite{yancheche}. As for the MCSH model,
the model on compact domains seems to have been explored only in the special case of flat tori, see \cite{chiric}
for example. For mixed solutions (that is, those with both vortices and antivortices), the moduli space, even on compact $\Sigma$, is noncompact due to missing points where vortices and antivortices coalesce, so the global persistence of entire moduli spaces becomes an interesting and quite nontrivial problem.

\subsection*{Acknowledgements} JMS would like to thank Jonathan Partington and Alex Strohmaier for useful conversations about functional analysis, and SPF thanks Priya Subramanian for introducing him to pseudo-arclength continuation. SPF is an EPSRC Doctoral Prize Fellow.

\bibliographystyle{stylefile}
\bibliography{bibliography}

\begin{thebibliography}{10}
\newcommand{\enquote}[1]{``#1''}

\bibitem{ada}
R.~A. Adams, {\em Sobolev Spaces\/} (Academic Press, London U.K., 1975).

\bibitem{alqspe}
L.~S. Alqahtani and J.~M. Speight, \enquote{Ricci magnetic geodesic motion of
  vortices and lumps}, {\em J. Geom. Phys.\/} {\bf 98} (2015), 556--574.

\bibitem{aub}
T.~Aubin, {\em Nonlinear analysis on manifolds. {M}onge-{A}mp\`ere
  equations\/}, vol. 252 of {\em Grundlehren der Mathematischen Wissenschaften
  [Fundamental Principles of Mathematical Sciences]\/} (Springer-Verlag, New
  York, 1982).

\bibitem{bapman}
J.~M. Baptista and N.~S. Manton, \enquote{The dynamics of vortices on {$S^2$}
  near the {B}radlow limit}, {\em J. Math. Phys.\/} {\bf 44} (2003),
  3495--3508, integrability, topological solitons and beyond.

\bibitem{bol}
B.~Bollob\'as, {\em Linear analysis\/}, Cambridge Mathematical Textbooks
  (Cambridge University Press, Cambridge, 1990), an introductory course.

\bibitem{bra}
S.~B. Bradlow, \enquote{Vortices in holomorphic line bundles over closed
  {K}\"ahler manifolds}, {\em Commun. Math. Phys.\/} {\bf 135} (1990), 1--17.

\bibitem{cafyan}
L.~A. Caffarelli and Y.~S. Yang, \enquote{Vortex condensation in the
  {C}hern-{S}imons {H}iggs model: an existence theorem}, {\em Commun. Math.
  Phys.\/} {\bf 168} (1995), 321--336.

\bibitem{chiric}
F.~Chiacchio and T.~Ricciardi, \enquote{Multiple vortices for a self-dual
  {${\bf C}\rm P(1)$} {M}axwell-{C}hern-{S}imons model}, {\em NoDEA Nonlinear
  Differential Equations Appl.\/} {\bf 13} (2007), 563--584.

\bibitem{colton}
B.~{Collie} and D.~{Tong}, \enquote{{Dynamics of Chern-Simons vortices}}, {\em
  Phys.\ Rev.\ D\/} {\bf 78} (2008), 065013.

\bibitem{donkro}
S.~K. Donaldson and P.~B. Kronheimer, {\em The geometry of four-manifolds\/},
  Oxford Mathematical Monographs (The Clarendon Press, Oxford University Press,
  New York, 1990), oxford Science Publications.

\bibitem{garpra}
O.~Garc\'\i~a Prada, \enquote{A direct existence proof for the vortex equations
  over a compact {R}iemann surface}, {\em Bull. London Math. Soc.\/} {\bf 26}
  (1994), 88--96.

\bibitem{grihar}
P.~Griffiths and J.~Harris, {\em Principles of algebraic geometry\/}, Wiley
  Classics Library (John Wiley \& Sons, Inc., New York, 1994), reprint of the
  1978 original.

\bibitem{hankim}
J.~Han and S.~Kim, \enquote{Maxwell-{C}hern-{S}imons vortices on compact
  surfaces: nonequivalence of the first and the second order equations}, {\em
  J. Math. Phys.\/} {\bf 54} (2013), 023505, 15.

\bibitem{hannam}
J.~Han and H.-S. Nam, \enquote{On the topological multivortex solutions of the
  self-dual {M}axwell-{C}hern-{S}imons gauged {${\rm O}(3)$} sigma model}, {\em
  Lett. Math. Phys.\/} {\bf 73} (2005), 17--31.

\bibitem{hanson}
J.~Han and K.~Song, \enquote{Existence and asymptotics of topological solutions
  in the self-dual maxwell–chern–simons o(3) sigma model}, {\em Journal of
  Differential Equations\/} {\bf 250} (2011), 204 -- 222.

\bibitem{hanlinyan}
X.~{Han}, C.-S. {Lin} and Y.~{Yang}, \enquote{{Resolution of Chern-Simons-Higgs
  Vortex Equations}}, {\em Communications in Mathematical Physics\/} {\bf 343}
  (2016), 701--724.

\bibitem{hilphi}
E.~Hille and R.~S. Phillips, {\em Functional analysis and semi-groups\/},
  American Mathematical Society Colloquium Publications, vol. 31 (American
  Mathematical Society, Providence, R. I., 1957), rev. ed.

\bibitem{jacleewei}
R.~Jackiw, K.~Lee and E.~J. Weinberg, \enquote{Self-dual {C}hern-{S}imons
  solitons}, {\em Phys. Rev. D (3)\/} {\bf 42} (1990), 3488--3499.

\bibitem{kimlee}
Y.~{Kim} and K.~{Lee}, \enquote{{First and second order vortex dynamics}}, {\em
  Phys.\ Rev.\ D\/} {\bf 66} (2002), 045016.

\bibitem{leeleemin}
C.~Lee, K.~Lee and H.~Min, \enquote{{Self-dual Maxwell Chern-Simons solitons}},
  {\em Phys.\ Lett.\/} {\bf B252} (1990), 79--83.

\bibitem{mansut}
N.~S. Manton and P.~M. Sutcliffe, {\em Topological Solitons\/} (Cambridge
  University Press, Cambridge U.K., 2004).

\bibitem{rictar}
T.~Ricciardi and G.~Tarantello, \enquote{Vortices in the
  {M}axwell-{C}hern-{S}imons theory}, {\em Comm. Pure Appl. Math.\/} {\bf 53}
  (2000), 811--851.

\bibitem{tau}
C.~H. Taubes, \enquote{Arbitrary {$N$}-vortex solutions to the first order
  {G}inzburg-{L}andau equations}, {\em Commun. Math. Phys.\/} {\bf 72} (1980),
  277--292.

\bibitem{voi}
C.~Voisin, {\em Hodge theory and complex algebraic geometry. {I}\/}, vol.~76 of
  {\em Cambridge Studies in Advanced Mathematics\/} (Cambridge University
  Press, Cambridge, 2007), english edn., translated from the French by Leila
  Schneps.

\bibitem{yancheche}
S.-G. Yang, Z.-Y. Chen and J.-L. Chern, \enquote{The solution structure of the
  o(3) sigma model in a maxwell-chern-simons theory}, {\em Journal of
  Mathematical Physics\/} {\bf 58} (2017), 071503.

\bibitem{yan}
Y.~Yang, {\em Solitons in field theory and nonlinear analysis\/}, Springer
  Monographs in Mathematics (Springer-Verlag, New York, 2001).

\end{thebibliography}

\end{document}